\renewcommand{\Comment}[1]{\textcolor{gray}{$\triangleright$ #1}}
\newcommand{\LComment}[1]{\State\Comment{#1}}
\newcommand{\algindent}{\hspace{\algorithmicindent}}
\let\cite\citep
\let\pmax\pone
\let\pmin\ptwo
\newcommand{\chance}{\ensuremath{\mathsf{C}}}
\let\nature\chance
\newcommand\Root\varnothing
\newcommand{\cI}{\mathcal I}
\newcommand{\cJ}{\mathcal J}
\newcommand{\ie}{{\em i.e.}}
\newcommand{\eg}{{\em e.g.}}
\let\vec\boldsymbol
\newcommand{\chessmoves}[1]{{\bfseries  #1}}
\tikzset{
    infoset1/.style={-, densely dotted, ultra thick, draw=p1color},
    infoset2/.style={infoset1, draw=p2color},
    terminal/.style={},
  p1/.style={
      regular polygon,
      regular polygon sides=3,
      inner sep=2pt, fill=p1color, draw=none,%
      },
  p2/.style={p1, shape border rotate=180, fill=p2color},
}
\definecolor{p1color}{RGB}{31,119,180}
\definecolor{p2color}{RGB}{255,127,14}
\definecolor{p3color}{RGB}{44,160,44}
\definecolor{p4color}{RGB}{214,39,40}
\DeclareMathOperator*{\argmax}{argmax}
\DeclareMathOperator*{\E}{\mathbb E}
\let\eps\epsilon
\theoremstyle{plain}
\newtheorem{theorem}{Theorem}
\newtheorem{claim}{Claim}
\newcommand{\figurelabelfont}{\small\bfseries\sffamily}
\title{General search techniques without common knowledge for imperfect-information games, and application to superhuman Fog of War chess}
\author{%
	Brian Hu Zhang \\
    MIT CSAIL \\
    \url{zhangbh@csail.mit.edu}
    \and
	Tuomas Sandholm \\
    Carnegie Mellon University 
    \\ \url{sandholm@cs.cmu.edu}
}
\begin{document}

\maketitle

\begin{abstract}
 Since the advent of AI, games have served as progress benchmarks. Meanwhile, imperfect-information variants of chess have existed for over a century, present extreme challenges, and have been the focus of decades of AI research. Beyond calculation needed in regular chess, they require reasoning about information gathering, the opponent’s knowledge, signaling, \textit{etc}. The most popular variant, \textit{Fog of War (FoW) chess} (a.k.a. \textit{dark chess}), has been a major challenge problem in imperfect-information game solving since superhuman performance was reached in no-limit Texas hold’em poker. We present \textit{Obscuro}, the first superhuman AI for FoW chess. It introduces advances to search in imperfect-information games, enabling strong, scalable reasoning. Experiments against the prior state-of-the-art AI and human players---including the world's best---show that \textit{Obscuro} is significantly stronger. FoW chess is the largest (by amount of imperfect information) turn-based zero-sum game in which superhuman performance has been achieved and the largest zero-sum game in which imperfect-information search has been successfully applied.
\end{abstract}

\section{Introduction}

The concept of breaking a
large problem into subproblems and {\em searching} through them individually has been with us since time immemorial. In artificial intelligence (AI), search is a core capability that is required for strong performance in many applications. In game solving, this commonly takes the form of {\em subgame solving}. In games of perfect information, subgame solving is conceptually straightforward, because every new state induces a subgame that can be analyzed independently of the rest of the game. Subgame solving in perfect-information games is as old as computers themselves: Alan Turing and David Champernowne wrote a chess engine {\em Turochamp} in 1948 using minimax search and a hand-crafted function for evaluating nodes~\cite{Kasparov18:Reconstructing}. In landmark results, subgame solving has played a key role in reaching superhuman level in chess~\cite{Campbell02:Deep} 
 and go~\cite{Silver16:Mastering,Silver17:Mastering,Silver18:General}. 

In contrast to such perfect-information games, most real-world settings are \textit{imperfect-information} games. These include negotiation, business, finance, and defense applications. Thus it is crucial for the field of AI to develop strong techniques for imperfect-information games. Such games involve additional challenges not present in perfect-information games. For example, AIs for imperfect-information games might need to randomize their actions to prevent the opponent from learning too much information, and a player's optimal action in a state can depend on that same player's action in a totally different state. Therefore, subgame solving in imperfect-information games is drastically more difficult. Methods for real-time subgame solving in imperfect-information games have only been developed relatively recently~\cite{Gilpin06:Competitive,Gilpin07:Better,Waugh09:Strategy,Ganzfried15:Endgame,Burch14:Solving,Moravcik16:Refining,Brown17:Safe_with_WS,Moravvcik17:DeepStack,Brown18:Superhuman,Brown19:Superhuman,Brown20:Combining,Sokota23:Update}, and they were key to achieving superhuman performance in no-limit Texas hold'em poker~\cite{Brown18:Superhuman,Brown19:Superhuman,Moravvcik17:DeepStack}. Strong AI performance has also been achieved in a few imperfect-information zero-sum games that are even larger~\cite{Vinyals19:Grandmaster,Berner19:Dota,Perolat22:Mastering}. However, these were  accomplished with learning alone and did not enjoy the further performance benefits that search could bring, due largely to the lack of scalability of subgame solving algorithms for imperfect-information games larger than poker.

In this paper we present dramatically more scalable general-purpose subgame solving techniques for imperfect-information games. We used these techniques to create {\em Obscuro}, an AI that achieved superhuman performance in \textit{Fog of War (FoW) chess} (a.k.a. \textit{dark chess}), the most popular variant of imperfect-information chess. 
Over 120 games against humans of varying skill levels---including the \#1-ranked human---and 1,000 games against the previous state-of-the-art FoW chess AI~\cite{Zhang21:Subgame}, we conclusively demonstrate that {\em Obscuro} is stronger than any other current agent---human or artificial---for FoW chess.
FoW chess is now the largest (measured by amount of imperfect information) turn-based game in which superhuman performance has been achieved and the largest game in which imperfect-information search techniques have been successfully applied.

In the next section we will introduce the game and discuss the challenges that players in these types of games must tackle. In the section after that, we present our AI agent \textit{Obscuro} and the algorithms therein. In the section after that, we present our experiments. Finally we present conclusions and future research directions. 

\section{Challenges in imperfect-information games such as Fog of War chess}

Imperfect-information versions of chess have captured the imagination of chess players and scientists alike for over a century. To our knowledge, the first imperfect-information version of chess was {\em Kriegspiel}, invented in 1899 and based on the earlier game {\em Kriegsspiel}, a war game used by the Prussian army in the early 19th century for training~\cite{Pritchard94:Encyclopedia}.  In the modern day, there are multiple imperfect-information variants of chess, including {\em Kriegspiel, reconnaissance blind chess} (RBC), and {\em Fog of War (FoW) chess}.\footnote{Despite its similar name, {\em Chinese dark chess} has no private information, and thus does not require the types of reasoning that are required in FoW chess.} Imperfect-information chess is a recognized challenge problem in AI. Although there has been AI research in Kriegspiel~\cite{Parker05:Game,Russell05:Efficient,Ciancarini09:Monte} and RBC~\cite{Gardner23:Machine}, strong 
performance has not been achieved in Kriegspiel, and RBC is not played competitively by humans. 
By comparison, FoW chess has surged in popularity due to its implementation on the major chess website chess.com, and strong human experts have emerged among thousands of active players.\footnote{As of April 2025, the Fog of War chess leaderboard on chess.com~\cite{ChessCom25:Fog} listed 19,150 active players.} It is the most popular variant of imperfect-information chess 
by far, and strong human experts exist who can serve as challenging benchmarks of progress.

FoW chess presents a unique combination of challenges that did not exist in prior superhuman AI milestones.\footnote{The complete rules of FoW chess can be found in \cref{sec:rules}} First, chess itself is a highly tactical game often requiring careful lookahead, and FoW chess is no different: there are often positions where one player has perfect or near-perfect information and can execute a sequence of moves that results in an advantage. 
Thus, a strong agent must have solid lookahead capability. 
Lookahead in other games is usually accomplished by subgame solving. Thus it would be desirable to be able to conduct subgame solving in FoW chess too.

Second, private information is rapidly gained and lost. It is possible for the size of a player's {\em information set} (infoset)---\ie, set of indistinguishable positions given a player's observations---to rapidly increase and then decrease again, for example, from hundreds up to millions and then back down to hundreds, in a matter of a few moves. Thus, a strong agent must have the ability to reason about this rapidly-changing information. 

Third, a strong agent must at least somewhat play a {\em mixed strategy}---that is, it must randomize its actions. Otherwise, an adversary who knows the strategy, or has learned the strategy from past observation, can easily exploit that knowledge.

Finally, in games like FoW chess, reasoning about {\em common knowledge} is difficult. This is a key challenge because most algorithms for subgame solving---including those that led to breakthroughs in no-limit Texas hold'em poker---rely on the ability to reason about common knowledge, or often even the ability to {\em enumerate} the entire common-knowledge set---that is, the smallest set of histories $C$ with the property that it is common knowledge that the true history lies in $C$~\cite{Brown18:Superhuman,Brown19:Superhuman}. So, to prepare for solving a subgame, prior algorithms need to reason about what the agent knows about what the opponent knows about what the agent knows, and so on. This need can dramatically expand the set of states that need to be incorporated into the subgame solving algorithm, making such methods impractical for games much larger than no-limit Texas hold'em.

For example, consider the two FoW chess positions in \cref{fig:common-knowledge}.\footnote{The sequence of moves in the figure is purely for the purpose of illustrating common knowledge, and does not represent strong play. For example, {\em Obscuro} never plays \chessmoves{1... g5} or \chessmoves{2... Qh4}.} Although seemingly completely distinct, it is possible to show (see \cref{sec:fow-infosetsize}) that these two positions are connected by no fewer than nine levels of ``I think that you think that...'' reasoning. Prior techniques would require the ability to generate this complex connection before starting subgame solving from either of the two positions.

\begin{figure}[!htbp]
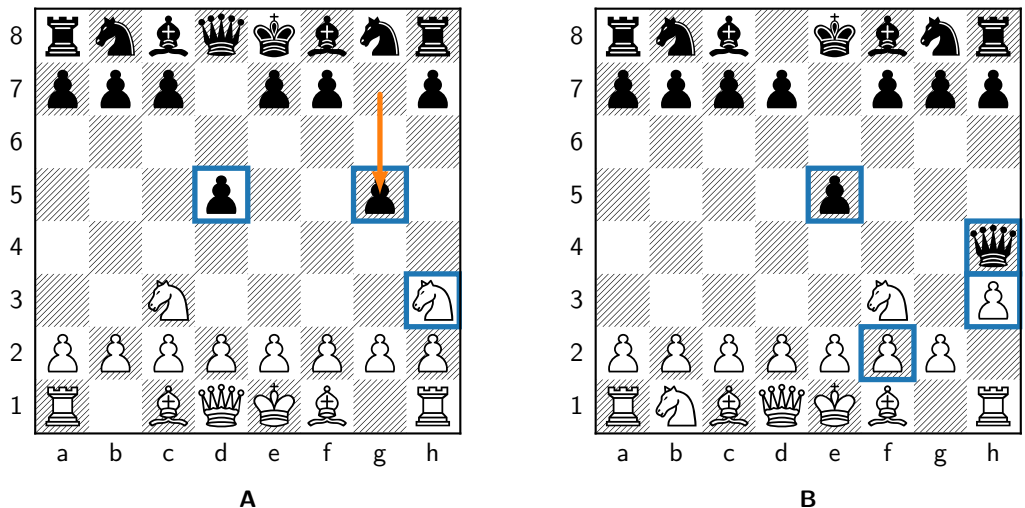

    \centering
    \newcommand{\boardsize}{0.9}
    \begin{tabular}{cc}
    \scalebox{\boardsize}{\chessboard[addfen=rnbqkbnr/ppp1pp1p/8/3p2p1/8/2N4N/PPPPPPPP/R1BQKB1R w KQkq - 0 3, padding=0em, pgfstyle=border, color=p1color,markfield={h3,g5,d5}, pgfstyle=straightmove, color=p2color, markmoves={g7-g5},showmover=false]} &
    \scalebox{\boardsize}{\chessboard[addfen=rnb1kbnr/pppp1ppp/8/4p3/7q/5N1P/PPPPPPP1/RNBQKB1R w KQkq - 1 3, padding=0em, pgfstyle=border, color=p1color, markfields={h4,h3,f2,e5}, showmover=false]} \\
\figurelabelfont A    & \figurelabelfont B
    \end{tabular}
    \caption{Two FoW chess positions in the same common-knowledge set. {\bf (A)} position after moves \chessmoves{1. Nc3 g5 2. Nh3 d5}; {\bf (B)} position after moves \chessmoves{1. Nf3 e5 2. h3 Qh4}. The {\color{p1color}boxed squares} mark pieces visible to the opponent.}
    \label{fig:common-knowledge}
\end{figure}

Such intricacies make it difficult to reason about common knowledge efficiently. For example, common-knowledge sets in FoW chess can quickly grow prohibitively large, so they cannot be held directly in memory~\cite{Zhang21:Subgame}.  In FoW chess, individual {\em infosets} often have size as large as $10^6$ and can have size $10^{9}$. Common-knowledge sets can have size $10^{18}$---far too large to be enumerated in reasonable time or space during search. (Detailed calculations for these lower bounds can be found in \cref{sec:fow-infosetsize}.) Perhaps even more troubling is the fact that it is not even clear that it is possible to efficiently decide whether two histories can be distinguished by common knowledge, so in some sense reasoning about common knowledge may {\em require} enumerating the common-knowledge set in the worst case.\footnote{\citet{Solinas23:History} formally state and study this and similar computational problems in general games, showing that they are intractable in the worst case, so it should perhaps not come as a surprise that they appear to be hard in FoW chess.}

This is in sharp contrast to poker, which has special structure that has driven the success of past efforts in that game. First, at least in two-player (heads-up) Texas hold'em poker, common-knowledge sets are not very large. They have size at most $\binom{52}{2} \binom{50}{2} \approx 1.6 \times 10^6$, and can thus easily be held in memory. Moreover, thanks to poker-specific optimizations~\cite{Johanson11:Accelerating}, subgame solving  in poker can be implemented in such a way that its complexity depends not on the size of the common-knowledge set but merely on the size of the infoset, enabling feasible subgame solving even when the common-knowledge sets are large, as is the case in multi-player poker.\footnote{Specifically, {\em Pluribus}~\cite{Brown19:Superhuman} would not have been feasible without these poker-specific optimizations.} In more general games where these domain-specific techniques do not apply---such as FoW chess---the complexity of traditional subgame-solving techniques for imperfect-information games would scale with the size of the common-knowledge set, which in our case renders such techniques totally infeasible.

\section{Description of our AI agent {\em Obscuro} and the new algorithms therein}

The technical innovations of {\em Obscuro} are in its search algorithms. At a high level, they operate as follows. At all times, the program maintains the full set $P$ of possible positions\footnote{A position describes where pieces are as well as the castling and \textit{en passant} rights.} given the observations that it has seen so far in the game, as well as a partial game tree $\hat\Gamma$ consisting of its calculations from the previous move. At the beginning of the game, $P$ contains only the starting position $s_0$, and $\hat\Gamma$ consists of a single node $s_0$, since the program has done no calculation. Although $P$ is small enough to fit in memory (usually $|P| \le 10^6$), it is too large to feasibly allow nontrivial reasoning about every single position in $P$ on every move. Therefore, the program instead samples a small subset $I \subseteq P$ at random, whose size is no more than a few hundred positions.

Given a subset $I$, the program at a high level executes the following steps.
\begin{enumerate}
    \item Construct an imperfect-information subgame $\Gamma$ incorporating the saved computation from the previous move ($\hat\Gamma$), as well as the positions in the sampled subset $I$.
    \item Compute an (approximately) optimal strategy profile (\ie, an approximate Nash equilibrium) of $\Gamma$\label{step:eq-comp}.
    \item Use the Nash equilibrium to expand the game tree $\Gamma$.\label{step:expand}
    \item Repeat the above two steps until a time budget is exceeded.
    \item Select a move.
\end{enumerate}

We now elaborate on each step individually. Full detail about our techniques, including formal descriptions of all techniques, proofs, and comparisions to prior work, can be found in \cref{app:details}.

\subsection{Step 1: Generating the initial game tree at the beginning of a turn}

The imperfect-information subgame $\Gamma$ is constructed from the old game tree $\hat\Gamma$ and the sampled additional positions $ s \in I$ according to a new algorithm which we call {\em knowledge-limited unfrozen subgame solving} (KLUSS). It is more effective than the {\em knowledge-limited subgame solving} (KLSS) algorithm of Zhang and Sandholm~\cite{Zhang21:Subgame} (a comparison is presented in \cref{app:details}). At a high level, KLSS and KLUSS address the issue of reasoning about common knowledge by assuming that sufficiently high-order knowledge is essentially irrelevant to game play: if there is a position $s$ in the old tree $\hat\Gamma$ such that we know that the opponent knows that we know that $s$ is not the true state, we remove $s$ from $\Gamma$ as it is assumed to be irrelevant. As an example, consider the game in \cref{fig:example2}. There are two players, $\pmax$ and $\pmin$. Suppose that we are $\pmax$, and we have arrived at the circled node (which is alone in its infoset, \ie, at this node, $\pmax$ has perfect information).

\begin{figure}[!htbp]
    \centering
     \begin{forest}
        [,nat   
            [,p1,name=x1
                [,p2,name=a1 [] []]
                [,p2,name=a2 [] []]
            ]
            [,p1,name=x2,label={west:1}
                [,p2,name=a3 [] []] []
            ]
            [,p1,name=x3,label={east:2}
                [,name=c1]
                [,p2,name=b1 [,name=b1l] []]
            ]
            [,p1,name=x4,label={east:3}
                [,p2,name=b2 [] [,name=b2r]]
                []
            ]
            [,p1,name=x5,label={east:$\infty$} [] [,name=x5r]]
        ]
        \draw[infoset2] (a1) -- (a3);
        \draw[infoset2] (b1) -- (b2);
        \draw[infoset1] (x2) -- (x3);
        \node[fit=(c1)(x5)(x5r)(b1l), fill opacity=0.130, fill=red]{};
        \node[fit=(x1), draw=p1color, circle, thick]{};
    \end{forest}
    \caption{An example game tree, to illustrate KLUSS. The box ($\square$) is a chance node. Dotted lines connect nodes in the same infoset.}
    \label{fig:example2}
\end{figure}
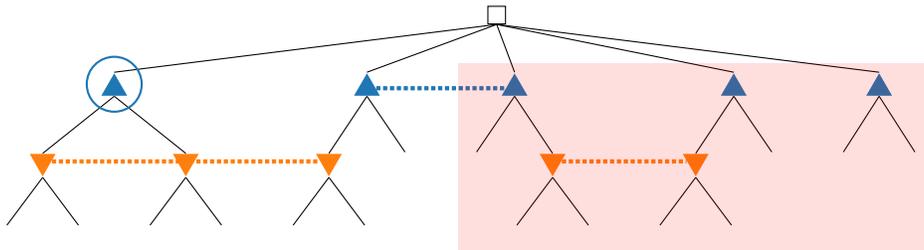

The infosets (dotted lines) define a {\em connectivity graph} $G$ among the five nodes in that layer of the tree: two nodes $u$ and $v$ are connected if there is an infoset connecting any descendant of $u$ (including $u$ itself) to any descendant of $v$ (including $v$ itself). The nodes in that layer are labeled according to their distance from the circled node; the node labeled $\infty$ is not connected. Distance corresponds to order of knowledge: if the true node is the circled node, then the distance is the smallest integer $k$ for which the statement
\begin{align*}
\underbrace{\text{everyone knows that everyone knows that ... everyone knows that}}_\text{$k$ repetitions} \text{the true node is not $u$}
\end{align*}
is false. Thus the shaded red region corresponds to nodes that will be removed: everyone knows that these nodes are not the true nodes. 
This allows the game tree to be kept to a manageable size, even when the common-knowledge set (which the program never computes or uses) is large.

Our approach has two important properties. First, it enables the agent to reason about the opponent's information in a more powerful way than assuming something pessimistic, such as the opponent having perfect  information~\cite{Parker05:Game,Russell05:Efficient}. This allows behavior such as bluffing, which is important to strong play. Second, it accomplishes this while essentially only examining states $s$ that are relevant in the sense that, as far as our agent knows, the opponent might believe that $s$ is the true state. This ensures that, even when the common-knowledge set is large and poorly structured (\eg, even when the vast majority of states in the common-knowledge set are irrelevant), subgame solving is still possible and effective.

The difference between KLUSS and KLSS is that in KLSS, the strategy at the two-node infoset for $\pmax$ in \cref{fig:example2} (and more generally all $\pmax$-nodes at distance 1 and their descendants) is frozen to that node's strategy from the previous move. In KLUSS, it is unfrozen and will be game-theoretically optimized together with the rest of the subgame that is not removed (\ie, not red).

KLSS and KLUSS are not always game-theoretically sound in theory because some of the removed (red in the figure) part of the game tree could be relevant to the decision, but they are often sound in practice~\cite{Zhang21:Subgame,Liu23:Opponent}. They can be viewed as a computationally feasible alternative to traditional game-theoretically sound subgame solving.

\subsection{Step 2: Equilibrium computation}

The remaining steps are inspired by the {\em growing-tree counterfactual regret minimization} (GT-CFR) algorithm~\cite{Schmid23:Student}: a game tree $\Gamma$ is simultaneously solved using an iterative equilibrium-finding algorithm and expanded using an expansion policy.

For equilibrium finding we use a state-of-the-art algorithm, {\em predictive CFR+} (PCFR+)~\cite{Farina21:Faster}. PCFR+ is an iterative, anytime algorithm for solving imperfect-information games, that can handle the fact that our game tree $\Gamma$ is changing over time. At all times $t$, PCFR+ maintains a profile $(x^t, y^t)$, where $x^t$ is our strategy and $y^t$ is the opponent's strategy.

PCFR+ has only been proven to converge {\em in average strategies}. That is, the empirical strategy profile $(\bar x^t, \bar y^t) := (\frac{1}{t} \sum_{s=1}^t x^s, \frac{1}{t} \sum_{s=1}^T y^s)$ converges to Nash equilibrium as $t \to \infty$. However, instead of computing the empirical average strategy, we circumvent this step and maintain only the last iterate $(x^t, y^t)$. There are several reasons for this choice, which are detailed in \cref{sec:selectaction}.

\subsection{Step 3: Expanding the game tree}

Nodes are selected for expansion by using carefully-designed {\em expansion policies} that balance exploration and exploitation. Our program chooses a node to expand by the following process. Fix one player to be the {\em exploring player}. (The choice of which player is exploring alternates: on odd-numbered iterations, P1 is the exploring player; on even-numbered iterations, P2 is the exploring player.) For this exposition, we will take P1 to be the exploring player. The {\em non-exploring} player will play according to its current strategy as computed by PCFR+, in this case $y^t$. The {\em exploring} player will play a perturbed version $\tilde x^t$ of its current strategy $x^t$. The strategy $\tilde x^t$ is designed to balance between exploitation and exploration. {\em Exploitation} here means playing actions with high possible reward, that is, actions that have positive probability in $x^t$. {\em Exploration} means assigning positive probability to every possible action, to hedge against the possibility that the current tree incorrectly estimates the value of the action due to lacking search depth. For this, we use a method based on the {\em polynomial upper confidence bounds for trees} (PUCT) algorithm~\cite{Silver16:Mastering}. Finally, a leaf node of the current tree $\Gamma$ is selected for expansion according to the strategy profile $(\tilde x^t, y^t)$. 

One major difference between our algorithm and the GT-CFR algorithm lies in having only one player use the exploring strategy $\tilde x^t$, rather than both. Intuitively, this remains sound, because tree nodes that {\em neither} player plays to reach are irrelevant to equilibrium play, and thus do not need to be expanded. In \cref{sec:theory}, we formally show that this variant, like GT-CFR, will find an exact equilibrium of any two-player zero-sum game given infinite search time. Thus, allowing one player to play directly from their equilibrium strategy (here, $y^t$) allows the tree expansion to be more focused. We call this GT-CFR variant {\em one-sided GT-CFR}.

Once a leaf node $z$ is chosen by the above process, its children are evaluated by a node heuristic and added to the game tree. The node heuristic is an estimate of the perfect-information value of $z$, as evaluated by the chess engine {\em Stockfish 14}~\cite{:Stockfish}. If $z$ is the first node in its infoset that has been expanded, a local regret minimizer is created for PCFR+, and it is initialized to pick the action with highest value according to the node heuristic. Theoretically, the guarantees of PCFR+ do not depend on the initialization, which can be arbitrary. However, practically, we find that initializing to a ``good guess'' of a good action leads to faster empirical convergence to equilibrium. More details can be found in \cref{sec:evaluatenewleaves}.

\subsection{Step 4: Repeat}

The above two steps are repeated, in parallel using a multi-threaded implementation, until a time budget is exceeded. Our implementation uses one thread running CFR and two threads expanding the game tree, which is shared across all three threads. The node expansion threads use locks to avoid expanding the same node, but the equilibrium computation thread uses no locks and only works on the already-expanded portion of the game tree. The time budget is set heuristically based on the amount of time remaining on the player's clock. Once the time budget is exceeded, the tree expansion threads (Step~\ref{step:expand}) are stopped first, and then, after a delay, the equilibrium computation thread (Step~\ref{step:eq-comp}). The added time allocated to equilibrium computation is present so that a more precise equilibrium can be computed without the tree constantly changing.

\subsection{Step 5: Selecting a move}

After those computations have stopped, a move is selected based on the (possibly mixed) strategy that PCFR+ has computed. Instead of directly sampling from this distribution, we first {\em purify} it~\cite{Ganzfried12:Strategy}---that is, we limit the amount of randomness. In particular, we sample from only the $m$ highest-probability actions, where $1 \le m \le 3$ is chosen based on the computed strategies. We only allow mixing ($m>1$) when the algorithm believes that its computed strategy is {\em safe}---intuitively, this is when the algorithm's final strategy $x^T$ can guarantee expected value at least as good as what the algorithm thought to be possible before the turn.  This purification technique made a significant difference in practice, detailed via an ablation test in \cref{sec:ablation}.

\section{Experimental evaluation}

To evaluate our techniques, we conducted several experiments. The first was a 1,000-game match against the previous state-of-the-art AI for FoW chess~\cite{Zhang21:Subgame} (hereafter \textit{ZS21}). Our new AI scored 85.1\% (+834 =33 -133)\footnote{This notation means 834 wins, 33 draws, and 133 losses.}, confidently establishing its superiority.

We then ran two experiments against human players. The first of these was a series of games against human players of varying skill levels. {\em Obscuro} played a total of 117 games (with time control 3 minutes + 2 seconds per move). This time control was selected because it was the most popular time control played on the most popular website for FoW chess (chess.com) at the time of the experiment. While in regular chess both fast and slow games are common, in FoW chess slow games are typically not played.  The skill levels of the players, measured by their chess.com Fog of War chess ratings, ranged from 1450 to 2006.  We excluded 17 of the games for various reasons such as disconnections, the opponent leaving before the game finished, or the opponent clearly losing on purpose, leaving 100 completed games.  {\em Obscuro} scored 97\% (+97 =0 -3), establishing conclusively that it is stronger than humans of this level. 

Finally, we invited the top FoW chess player to a 20-game match (again at 3+2 time control). At the time of our match, \ie, as of the rating list on August 16, 2024~\cite{ChessCom24:Fog}, this player was rated 2318 and ranked \#1 on the chess.com Fog of War blitz leaderboard.  The games were played over the course of two days, 10 games per day, giving the human player an opportunity to analyze the first set of games overnight.   In this match, {\em Obscuro} scored 80\% (+16 =0 -4, +241 Elo), a conclusive and statistically significant ($p=0.0118$ using an exact binomial test) victory against the world's strongest player.  We thus conclude that {\em Obscuro} is superhuman.

The 20 games played against the top human are available through the following link: \url{https://lichess.org/study/sja93Uc0}

A curated sample of particularly interesting games from our 100 games played against humans of varying skill levels, including all three games lost by {\em Obscuro}, is available through the following link: \url{https://lichess.org/study/1zHFym7e}

In both links, each game lists which side {\em Obscuro} played (Black or White) and the game result (Win or Loss). All games are shown from the perspective of {\em Obscuro}.

\subsection{Ablations} \label{sec:ablation} In addition, we conducted  multiple ablations with {\em Obscuro}. In each of these experiments, we turned off one or more of the new techniques introduced in this paper in order to evaluate the contributions of the different techniques to the performance of \textit{Obscuro}. All ablations were run at a time control of 5 seconds per move. Unless otherwise stated, all ablations were {\em Obscuro} playing against a version of {\em Obscuro} with the single stated technique turned off. Recall from above that {\em Obscuro} with all techniques turned on scored 85.1\% against ZS21 and 80\% against the top human.
\begin{enumerate}
    \item {\em  Purification off.} This version allowed mixing among all stable actions, even if the current margin is negative or there are more than three of them.  
    
    In a 1,000-game match, {\em Obscuro} scored 70.2\% (+662 =79 -259). 

    \item {\em  KLUSS off.} In this version, the strategies in infosets not touching our infoset were frozen, as in 1-KLSS. 
    
    In a 1,000-game match, {\em Obscuro} scored 58.0\% (+532 =96 -372). 

    \item {\em One-sided GT-CFR off}. In this version we use the two-sided node expansion algorithm proposed by the original GT-CFR paper~\cite{Schmid23:Student}.  
    
    In a 10,000-game match, {\em Obscuro} scored 53.3\% (+4535 =1583 -3882).

    \item {\em Non-uniform Resolve distribution off.} In \cref{sec:resolve}, we describe a modification to the {\em Resolve} subgame solving~\cite{Burch14:Solving} algorithm that we made for {\em Obscuro}, in which the root nodes of the game tree are weighted using a nonuniform distribution, instead of the uniform distribution prescribed by \citet{Burch14:Solving}. For this ablation, we turned this modification off, and instead used the uniform distribution, as was done in prior papers on subgame solving including ZS21. 
    
    In a 10,000-game match, {\em Obscuro} scored 53.3\% (+4595 =1478 -3927). 

    \item {\em Two-sided GT-CFR only, against ZS21.} In this ablation, we turned off all the above improvements 1, 2, 3, and 4, and matched the resulting agent against that of ZS21. This serves to isolate the effect of using GT-CFR compared to using the LP-based equilibrium computation and iterative deepening node expansion as in ZS21.

    In a 1,000-game match, the two-sided GT-CFR version scored 72.6\% (+711 =30 -259) against ZS21. 

    \item {\em Weaker evaluation function.} To test the impact of the evaluation function, we hand-crafted a simple evaluation function that takes into account only the material difference and number of squares visible to each player. We substituted this evaluation function in place of {\em Stockfish 14}'s neural network-based evaluation function, creating a new agent that we call {\em simple-eval (SE) Obscuro}. This evaluation function is very simple, and would not be well suited to regular chess. We tested {\em simple-eval Obscuro} against both {\em Obscuro} and ZS21.

    In a 1,000-game match, {\em Obscuro} scored 81.9\% (+787 =63 -150) against {\em SE Obscuro}. 

    In a 10,000-game match, {\em SE Obscuro} scored 55.0\% (+5258 =486 -4256) against ZS21.

    This experiment shows that the evaluation function has a significant impact on the performance of {\em Obscuro}. Yet, the search algorithm is also vital: even a simplistic evaluation function with our improved search techniques is enough to be superior to ZS21. 
\end{enumerate}

All results in this and the next subsection are highly statistically significant ($z>5$). The results suggest that each improvement played a significant role in the improvement of {\em Obscuro} over the previous state-of-the-art AI. 
\subsection{Other experiments}

Finally, to test the effect of the time limit on the performance of {\em Obscuro}, we tested versions of {\em Obscuro} with different time limits against each other. The results were as follows. All matches consisted of 10,000 games.

\begin{itemize}

   \item  {\em Obscuro} with $\frac18$s/move scored 56.4\% (+5162 =943 -3895) against {\em Obscuro} with $\frac1{16}$s/move.

    \item {\em Obscuro} with $\frac14$s/move scored 56.5\% (+5031 =1231 -3738) against {\em Obscuro} with $\frac18$s/move.

     \item {\em Obscuro} with $\frac12$s/move scored 56.7\% (+4923 =1503 -3574) against {\em Obscuro} with $\frac14$s/move.

     \item {\em Obscuro} with 1s/move scored 54.0\% (+4617 =1566 -3817) against {\em Obscuro} with $\frac12$s/move.

 \item {\em Obscuro} with 2s/move scored 53.7\% (+4589 =1561 -3850) against {\em Obscuro} with 1s/move.
 
 \item {\em Obscuro} with 4s/move scored 52.3\% (+4463 =1530 -4007) against {\em Obscuro} with 2s/move.
 
 \item {\em Obscuro} with 8s/move scored 52.4\% (+4501 =1482 -4017) against {\em Obscuro} with 4s/move.
 
 \item {\em Obscuro} with 16s/move scored 52.3\% (+4448 =1563 -3989) against {\em Obscuro} with 8s/move.

\end{itemize}

These results, converted to the standard Elo scale, are visualized in \cref{fig:timegraph}. As expected and in line with known results for other settings (\eg, for regular chess~\cite{Silver17:Mastering}), increasing search time has a significant impact on playing strength, but with somewhat diminishing returns.

Finally as a sanity check, we also tested {\em Obscuro} against a random opponent. The only realistic way for {\em Obscuro} to lose to a random opponent is by not defending against \chessmoves{Qa4+} or \chessmoves{Qa5+} in the opening as previously discussed, which happens with only very small probability. As previously discussed, occasionally losing to a weak (here, random) player would not in itself evidence that {\em Obscuro} is playing suboptimally, since even an exact equilibrium player should lose to a random player with positive probability. Nonetheless, {\em Obscuro} won 1000 consecutive games against the random opponent.

\begin{figure}[!htbp]
    \centering
    \includegraphics[width=0.7\textwidth]{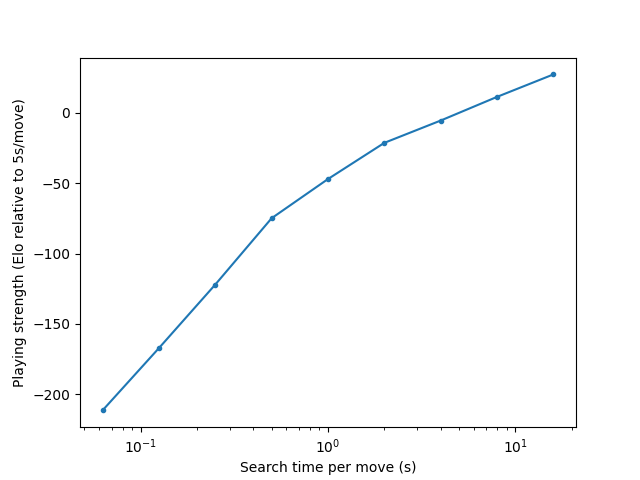}
    \caption{Visualization of time scaling of {\em Obscuro}. The $y$-axis is relative to the playing strength of {\em Obscuro} with 5 seconds per move.}
    \label{fig:timegraph}
\end{figure}

\section{Conclusions and future research}

We presented the first superhuman agent for FoW chess, {\em Obscuro}. Our agent is completely based on real-time search. Thus, {\em Obscuro} ran on regular consumer hardware, in contrast to most prior superhuman efforts involving search that we have discussed, which have run on large computing clusters with far more computing power at play time.  This demonstrates the power of search alone.
FoW chess is now the largest (measured by amount of imperfect information) turn-based game in which superhuman performance has been achieved and the largest game in which imperfect-information search techniques have been successfully applied.

Since FoW chess is somewhat similar to regular chess, it was sufficient to combine a perfect-information evaluation function from regular chess (namely, that used by {\em Stockfish}) with our game-independent state-of-the-art search algorithms for imperfect-information games.  
Also, {\em Obscuro} stores at all times the entire set of possible states in memory. 
While these techniques were feasible for FoW chess---due to the similarity to regular chess and the relatively small infosets---one can imagine even more complex games on which they will not work directly. 

Even more complex settings could be tackled by merging our techniques with deep reinforcement learning to learn the evaluation function, instead of using a perfect-information-game evaluation function (in our case, from \textit{Stockfish}), and/or using {\em continuation strategies}~\cite{Brown19:Superhuman} to mitigate game-theoretic issues caused by using node-based evaluation functions in imperfect-information games. In a different direction, further play strength and scalability could be achieved by sampling from an infoset using a model of opponent behavior instead of doing so uniformly.

\section*{Acknowledgments}
We thank Stephen McAleer for helpful discussions.
This material is based on work supported by the Vannevar Bush Faculty
Fellowship ONR N00014-23-1-2876, National Science Foundation grants
RI-2312342 and RI-1901403, ARO award W911NF2210266, and NIH award
A240108S001. B.H.Z. was also supported by the CMU Hans J. Berliner Graduate Fellowship in Artificial Intelligence.

\bibliographystyle{plainnat}
\bibliography{dairefs} %

\newpage
\appendix
\crefalias{section}{appendix}
\crefalias{subsection}{appendix}
\crefalias{subsubsection}{appendix}
\crefalias{subsubsubsection}{appendix}
\section{Rules of FoW chess}\label{sec:rules}
FoW chess is identical to regular chess, except for the following differences~\cite{ChessCom25:FogRules}.
\begin{itemize}
    \item A player wins by capturing the opposing king. There is no check or checkmate. Thus:
    \begin{itemize}
        \item Moving into (or failing to escape) a check is legal and thus results in immediate loss. 
        \item Castling into, out of, or through check is legal (though, of course, castling into check loses immediately). 
        \item Stalemate is a forced win for the stalemating player.
        \item There is no draw by insufficient material. In particular, KN vs K is a strong position for the KN, and even K vs K is not an immediate draw (although K vs K is drawn in equilibrium except in some  literal edge cases where one king is on the edge of the board and cannot immediately escape.)
    \end{itemize}
    \item After every move, each player observes all squares onto which her pieces can legally move.
    \item If a pawn is blocked from moving forward by an opposing piece (or pawn), the square on which the opposing piece/pawn sits is {\em not} observed. Thus, the player knows that the pawn is blocked, but not what is blocking it (unless, of course, some other piece can capture it.) 
    \item If a pawn can capture {\em en passant}, the pawn that can be captured {\em en passant} is visible.

    In particular, the above rules imply that both players always know their exact set of legal moves.
    \item Threefold repetition and 50-move-rule draws do not need to be claimed. In particular, a draw under either rule can happen without either player knowing for certain until it happens and the game ends.
\end{itemize}

\section{Notation and Preliminaries}

The techniques used in {\em Obscuro} are general, so in this section we will formulate them in terms of general extensive-form games. To do this, we need to introduce some notation. 

\subsection{Notation and Preliminaries}

A {\em two-player zero-sum timeable extensive-form game} (hereafter simply {\em game}) consists of:
\begin{enumerate}
\item a tree of {\em histories}  $H$, rooted at the {\em empty history} $\Root \in H$. The set of leaves ({\em terminal nodes}) of $H$ is denoted $Z$. Each downward edge out of a non-leaf node $h \in H \setminus Z$ is labeled with a distinct {\em action} or {\em move}. The node reached by following the edge (action) $a$ at node $h$ is denoted $ha$. The set of actions available at $h$ is denoted $A(h)$.
\item a {\em payoff function} $u : Z \to [-1, +1]$,
\item a partition $H \setminus Z = H_\chance \sqcup H_\pmax \sqcup H_\pmin$, denoting whose turn it is---that is, for each $i \in \{\nature, \pmax, \pmin\}$, $P_i$ is the set of nodes at which player $i$ moves. Player $\nature$ is chance, who plays according to a fixed strategy $p(\cdot|h)$.\footnote{FoW chess contains no chance moves, but we include this in the interest of generality.}
\item for each player $i \in \{\pmax, \pmin\}$ and each $h \in H$, an {\em observation} $o_i(h)$ that player $i$ receives upon reaching $h$. The observation uniquely determines whether $h \in P_i$ (\ie, whether it is player $i$'s turn) and, if it is, the set of legal actions. That is, if $o_i(h) = o_i(h')$, then $h \in P_i$ if and only if $h' \in P_i$, and if so, $A(h) = A(h')$.
\end{enumerate}

We will use $\preceq$ to denote the precedence relation on a tree. For example, if $h, h'$ are histories then $h \preceq h'$ means $h$ is an ancestor of $h'$. If $s, s'$ are sequences of player $i$, then $s \preceq s'$ means that $s$ is a prefix of $s'$. If $S$ is a set, $s \succeq S$ means $s \succeq s'$ for some $s' \in S$. The {\em downward closure} of $S$ is $\bar S := \{ h : h \succeq S\}$. \footnote{In this paper, we visualize trees expanding from the top downwards, so $\bar S$ is the set of descendants of $S$.}

We will distinguish between {\em states} and {\em histories}. A {\em state} is a sufficient statistic for future play of the game. That is, all data about the subtree rooted at a history $h$ is uniquely determined by the state at $h$. Multiple histories can have the same state. 

The {\em sequence} of a player $i$ upon reaching a node $h \in H$ is the sequence of observations made and actions played by $i$ so far. Two nodes $h, h'$ are {\em indistinguishable to player $i$}, written $h \sim_i h'$, if they have the same sequence for player $i$. An equivalence class of $H$ under $\sim_i$ is an {\em infoset}, for player $i$.  Throughout this paper, $I$ will denote a $\pmax$-infoset, and $J$ will denote a $\pmin$-infoset. We will assume, without loss of generality, that the player sequence and opponent sequence together uniquely specify a game tree node---that is, $|I \cap J| = 1$ for every $\pmax$-infoset $I$ and $\pmin$-infoset $J$.  

By convention, information sets containing nodes at which player $i$ is not the acting player are typically not drawn (and often not even defined); in our paper, we will need them in order to define the knowledge graph. Thus let $\cI$ (resp. $\cJ$) denote the set of infosets at which $\pmax$ (resp. $\pmin$) is the acting player. If $a$ is a legal action at an infoset $I \in \cI$, the sequence reached by playing $a$ at $I$ is $(I,a)$. The set of nodes $\{ha : h \in I\}$ will also be denoted $(I,a)$. Let $s_i(h)$ denote the sequence of player $i$ at $h$, as of the last time player $i$ played an action. Thus, $s_{\pmax}(h)$ (resp. $s_\pmin(h)$) can be identified with a pair $(I, a)$ where $I \in \cI$ (resp. $(J, a)$ where $J \in \cJ$).

 A {\em (behavioral) strategy} of $\pmax$ (resp. $\pmin$) is a selection of a distribution of actions at each infoset, $x \in X = \bigtimes_{I \in \cI} \Delta(A(I))$ (resp. $y \in Y = \bigtimes_{J \in \cJ} \Delta(A(J))$). We will use the general notation $x(u'|u)$, where $u \preceq u'$ to denote the probability that $\pmax$ plays {\em all} actions on the $u \to u'$ path, where $u$ and $u'$ are sequences, infosets, or nodes. Similarly, $x(a|u)$ denotes the probability that $x$ takes action $a$ at $u$ (when $u \in \cI$ or $u \in H_\pmax$). If the right half is omitted, \eg, $x(u)$, it is understood to be $\Root$, \eg, $x(u) = x(u|\Root)$. In particular, $x(h)$ denotes the probability that $\pmax$ plays all actions on the $\Root \to h$ path. Similar notation is used for $\pmin$. 

The expected value for $\pmax$ in strategy profile $\pi = (x, y)$ is $u(\pi) := \E_{z\sim\pi} u(z)$ where the expectation is over terminal nodes $z$ when $\pmax$ plays $x$ and $\pmin$ plays $y$.  (Since the game is zero sum, the value for $\pmin$ is $-u(\pi)$.) 

The {\em conditional value} $u(\pi|S)$ is the conditional expectation given that some node in the set $S$ is hit. The {\em (conditional) best-response value} $u^*(x|J,a)$ to a $\pmax$-strategy $x \in X$ upon playing action $a$ at infoset $J$ is the best possible conditional value that $\pmin$ against $x$ after playing $a$ at $J$: $$
    u^*(x|J,a) = \min_{y \in Y : y(J,a)=1} u(x, y|J). 
$$ 
\newcommand{\cfv}{u^\textup{cf}}
\newcommand{\cbv}{u^\textup{cf*}}

{\em Counterfactual values} (CFVs), which we will denote by $\cfv$, are defined similarly to conditional values, but scaled by the probability of the other players playing to $J$: 
$$
    \cfv_\pmax(x,y; J, a) = u(x,y|J, a) \cdot \sum_{h \in J} x(h) p(h) = \sum_{z \succeq J} x(z) p(z) y(z|h) u(z).
$$

The best-response value at infoset $J$ is $u^*(x|J) = \min_a u^*(x|J,a)$. The best-response value $u^*(x)$ is  $\min_{y \in Y} u(x, y) = u^*(x|\Root)$. Analogous definitions hold  when the players are swapped.

\subsection{Order-$k$ knowledge, common knowledge, and subgame solving}\label{sec:subgame}

In this section, we present the mathematical notation we will use in the rest of the appendix, and relevant prior work on subgame solving (see, \eg, \cite{Kovarik21:Value} for an overview).

The {\em connectivity graph} $G$ of a game $\Gamma$ is the graph whose vertices are the nodes of $\Gamma$, and whose edges connect nodes in the same infoset of any player. Now let $I$ be any infoset.\footnote{The definition of $I^k$ can also be applied to arbitrary sets of nodes $I$, but here we will only need it for infosets.} The {\em order-$k$ knowledge set} $I^k$ is the set of nodes at distance strictly less than $k$ from some node in $I$. (In particular, $I^1 = I$.) The {\em common-knowledge set} $I^\infty$ is the set of all nodes a finite distance away from some node in $I$, \ie, it is the connected component in $G$ containing $I$. Intuitively, the distance from $I$ captures the level of common knowledge. If the true node is $h \in I$, then
\begin{enumerate}
    \item $\pmax$ knows $h \in I$,
    \item $\pmax$ knows $\pmin$ knows $h \in I^2$, 
    \item $\pmax$ knows $\pmin$ knows $\pmax$ knows $h \in I^3$,
\end{enumerate}
and so on. Hence the statement $h \in I^\infty$ is common knowledge. 

In the remainder of the paper, we take the perspective of the maximizing player $\pmax$. Subgame solving starts with a {\em blueprint strategy profile} $(x, y)$. In {\em Obscuro}, the blueprint strategy profile is simply the saved strategy from the computation on the previous move; on the first move, subgame solving does not require a blueprint.

Suppose that we reach infoset $I$ during a game. Before selecting a move at $I$, we would like to do some computation to compute a new strategy $x'$ that we will use instead of $x$. That is, we would like to perform {\em subgame solving}.

We will first describe two common variants of {\em common-knowledge} subgame solving: {\em Resolve}~\cite{Burch14:Solving} and {\em Maxmargin}~\cite{Moravcik16:Refining}, both of which we will use in {\em Obscuro}. %
Both variants begin by constructing a {\em gadget game} using common-knowledge set $I^\infty$, and are based on the principle of searching for a strategy $x'$ that does not worsen the opponent's best response values. More formally, let $M(x', J) := u^*(x'|J) - u^*(x|J)$ be the {\em margin} at $\pmin$-infoset $J \subseteq I^\infty$. The {\em alternate value} $u^*(x|J)$ is the value to which $\pmax$ must restrict $\pmin$ at $J$ in order to ensure that exploitability does not increase. 

{\em Maxmargin} and {\em Resolve} differ in how they aggregate the margins across the different information sets $J \in \cJ_0$. The {\em Maxmargin} objective is to maximize the minimum margin:
\begin{align*}
    \max_{x'} \min_{J \in \cJ_0} M(x', J).
\end{align*}
The {\em Resolve} objective is to maximize the average margin truncated to zero:
\begin{align*}
    \max_{x'} \frac{1}{|\cJ_0|} \sum_{J\in \cJ_0} [M(x', J)]^- \quad\text{where}\quad [z]^- := \min\{ 0, z\}.
\end{align*}
and $\cJ_0 := \{ J : J \subseteq I^\infty\}$ is the set of possible root infosets for $\pmin$ in the subgame.

A subgame solving method is {\em safe} if applying it cannot increase exploitability of the overall agent compared to not applying it---\ie, compared to playing the blueprint strategy. Both {\em Maxmargin} and {\em Resolve} are safe, assuming of course that subgames are solved exactly.\footnote{In our application, safety is hard to reason about: neither the blueprint strategy $x$ nor the subgame-solved strategy $x'$ are full-game strategies, so asking the question of which is less exploitable is strange.}

Subgame solving via {\em Resolve} and {\em Maxmargin} can also be performed using {\em gadget games}. In {\em Resolve}, the following gadget game is played. First, chance chooses a node $h \in I^\infty$ with probability proportional to $p(h) x(h)$. Then, $\pmin$ observes the infoset $J \ni h$, and decides whether to {\em play} or {\em exit}. If $\pmin$ exits, the game ends immediately with utility equal to the alternate value $u^*(x|J)$. Otherwise, the game continues as normal from node $h$. In {\em Maxmargin}, $\pmin$ first selects the infoset $J \in \cJ_0$, and then chance samples a node $h \in J$ with probability proportional to $p(h) x(h)$. Then, $\pmax$ immediately receives utility $-u^*(x|J)$.\footnote{This can be implemented, for example, by adding $u^*(x|J)$ to the value of every terminal node $z \succeq J$ in the subgame.} Chance then selects a node $h \in J$ with probability proportional to $p(h) x(h)$, and the game continues from $h$. 

{\em Maxmargin} and {\em Resolve} have very different behavior. When it is impossible to make all margins nonnegative (due to approximations), {\em Maxmargin} will make the {\em pessimistic} assumption that the opponent will play the worst infoset, whereas {\em Resolve} will, roughly speaking, assume that the opponent will play uniformly over all infosets with negative margin. On the other hand, when it is possible to make all margins nonnegative, there is a set of subgame strategies that are maximizers of the {\em Resolve} objective, that is, equilibria of the {\em Resolve} gadget game. {\em Resolve} allows any one of these strategies to be selected, whereas {\em Maxmargin} enforces that the strategy be in particular the one that maximizes the minimum margin.

In the state-of-the-art common knowledge subgame-solving technique, {\em reach subgame solving}~\cite{Brown17:Safe_with_WS}, any gifts given to us by the opponent through mistakes in reaching the subgame can be given back to the opponent within the subgame; this enlarges the strategy space that we can optimize over safely and thus has been shown to yield stronger play (\eg, in poker games~\cite{Brown17:Safe_with_WS,Brown18:Superhuman,Brown18:Depth,Brown19:Superhuman}). This is done by adjusting
the alternate values $u^*(x|J)$ in the case when $\pmin$ provably made a mistake(s) in playing to reach $J$. Reach subgame solving can be applied on top of either {\em Resolve} or {\em Maxmargin}. In particular, the value $$g(J) := \sum_{J'a' \preceq J} \qty[\cbv_\pmin(x;J'a') - \cbv_\pmin(x;J')],$$ which is an estimate of the gift using the current strategy profile $(x, y)$, is added to the alternate value at each infoset $J \in \cJ_0$.

In those prior subgame-solving techniques and ours, the desired gadget game then replaces the full game, and its solution is used to select a move at $I$. When a new infoset is reached, the process repeats, with the solution to the previous subgame taking the place of the blueprint.

\section{Further Details About {\em Obscuro}} \label{app:details}

We now give more details about {\em Obscuro}. \Cref{sec:prior} details the techniques that are identical to the prior SOTA (ZS21). The remaining subsections detail the improvements over ZS21 that we developed in {\em Obscuro}. We also include pseudocode for the major components of {\em Obscuro}, in Figures~\ref{fig:pseudocode-first}-\ref{fig:pseudocode-last}.

\subsection{Prior state of the art in FoW chess}\label{sec:prior}

{\em Obscuro} decides between {\em Maxmargin} and {\em Resolve} by examining the current objective value in the subgame. If $\pmin$ always chooses to exit in the resolve gadget (\ie, the current strategy is {\em safe}), {\em Maxmargin} is used. Otherwise, {\em Resolve} is used. This switch may happen, even multiple times, in the middle of the search process for a move, if the subgame value is fluctuating. Intuitively, this choice prevents the agent from being too pessimistic when faced with novel situations that it did not anticipate.

Between moves, {\em Obscuro} maintains the list of all possible states given its current sequence of observations, as well as the search tree and current approximate equilibrium strategy profile $(x, y)$ from the previous search. This previous strategy profile $(x, y)$ is used as the blueprint for subgame solving.

When it is {\em Obscuro}'s turn, {\em Obscuro} first builds both the {\em Maxmargin} and {\em Resolve} gadget subgames. The gadget subgames share the same game tree in memory after the subgame root layers. Thus, for example, node expansions and strategy updates for infosets beyond the subgame root layers apply to both subgame gadgets. This allows the transition between the two subgames, if necessary, to be smoothly executed. 

If insufficiently many nodes exist in the sample of {\em Obscuro}'s current infoset $I$, nodes are added by sampling at random without replacement from the set of possible states. At newly-added nodes $h$, the opponent is assumed to have perfect information, and the alternate value is set to $\min\{ \tilde v(h), v^*\}$ where $v^*$ was the expected value of {\em Obscuro} in the previous search, and $\tilde v(h)$ is \textit{Stockfish}'s evaluation function evaluated at $h$. 

\subsection{Better alternate values and gift values}
For alternate values in both {\em Resolve} and {\em Maxmargin}, in {\em Obscuro} we use $u(x, y|J)$ instead of the best-response value $u^*(x|J)$ which is more typically used in subgame solving as we described in \cref{sec:subgame}~\cite{Brown17:Safe_with_WS}. Similarly, we use the counterfactual values $\cfv(x,y;J,a)$ and $\cfv(x,y;J)$ to define the gift instead of the counterfactual best responses $\cbv(x;J,a)$ and $\cbv(x;J)$, resulting in the gift estimate
\begin{align*}
    \hat g(J) := \sum_{J'a' \preceq J} [\cfv(x,y;J'a') - \cfv(x,y;J')]^+
\end{align*}
These changes are for stability reasons: especially late in the tree, the current strategy $x$ may be inaccurate, and the best-response value $u^*(x|J)$ may not be an accurate reflection of the quality of the blueprint strategy $x$, especially near the top of the tree. Of course, if $(x, y)$ is actually an equilibrium of the constructed subgame, then these values are the same.

\subsection{Better root distribution for {\em Resolve}}\label{sec:resolve}
When using {\em Resolve}\footnote{For {\em Maxmargin}, there is no prior distribution because the adversary picks the distribution.} for the subgame solve in KLSS in games with no chance actions, the standard algorithm for {\em Resolve} will choose an opponent infoset $J$ uniformly at random from the distribution of possible infosets. In reality, the correctness of {\em Resolve} does not depend on the distribution chosen, so long as it is fully mixed. To be more optimistic, we therefore use a different distribution. We choose an infoset $J$ via an even mixture of a uniformly random distribution and the distribution of infosets generated from the opponent strategy in the blueprint. That is, the probability of the subgame root being infoset $J$ is $$\alpha(J) := \frac12 \left(\frac{y(J)}{\sum_{J'} y(J')} + \frac{1}{m}\right),$$ where $m$ is the number of $\pmin$-infosets in the current subgame and the sum is taken over those same infosets. In other words, the {\em Resolve} objective becomes 
\begin{align*}
    \max_{x'}  \sum_{J \in \cJ_0} \alpha(J) [M(x', J)]^-.
\end{align*}
In this manner, more weight is given to those positions that were found to be likely in the previous iteration, while maintaining at least some positive weight on every strategy.

\subsection{Better node expansion via GT-CFR}
\label{sec:theory}
{\em Growing-tree CFR} (GT-CFR)~\cite{Schmid23:Student} is a general technique for computing good strategies in games. Intuitively, it works, like PUCT, by maintaining a current game $\tilde\Gamma$ and simultaneously executing two subroutines: one that attempts to solve the game $\tilde\Gamma$, and one that expands leaf nodes of $\tilde\Gamma$. As mentioned in the body, we use PCFR+ for game solving. 

For expansion, we use a new variant of GT-CFR which we call {\em one-sided GT-CFR}, which, unlike PUCT and GT-CFR, may only expand a small fraction of nodes in the tree. As stated in the body, our one-sided GT-CFR algorithm selects the node to expand according to the profile $(\tilde x^t, y^t)$, where $y^t$ is the {\em non-expanding player's current CFR strategy} and $\tilde x^t$ is an exploration profile constructed from the expanding player's current strategy.\footnote{In this presentation, $\pmax$ is the expanding player. When $\pmin$ is the expanding player, the roles of $x$ and $y$ are also flipped. As stated in the body, the expanding player alternates between $\pmax$ and $\pmin$ after every node expansion.} As in GT-CFR, the expanding player's  strategy $\tilde x^t$ is a mixture of a strategy $\tilde x^t_\text{Max}(a|I)$ derived from the player's current strategy $x^t$ and an exploration strategy $\tilde x^t_\text{PUCT}(a|I)$ derived from PUCT~\cite{Silver16:Mastering}. In particular, we define
\begin{align*}
    \tilde x^t_\text{Max}(a|I) \propto \mathbf 1\{ x^t(a|I) > 0\}
\end{align*}
to be the uniform distribution over the support of the current CFR strategy, and 
\begin{align*}
    \tilde x^t_\text{PUCT}(a|I) = \mathbf1\{ a = \argmax_{a'} \bar Q(I,a)\}
\end{align*}
where
\begin{align*}
    \bar Q(I,a) = u(x^t, y^t|I,a) + C \sigma^t(I,a) \frac{\sqrt{N^t(I)}}{1 + N^t(I,a)}.
\end{align*}
Here, $C$ is a tuneable parameter (which we set to $1$); $\sigma^t(I,a)$ is the empirical variance of $u(x^t, y^t|I,a)$ over the previous times we have visited $I$ during expansion (with two prior samples of $-1$ and $+1$ to ensure it is never zero); $N^t(I)$ is the number of times infoset $I$ has been visited during expansion; and $N^t(I,a)$ is the number of times action $a$ has been selected. 
Finally, as in GT-CFR, we define
\begin{align*}
    \tilde x^t_\text{sample}(a|I) = \frac12 \tilde x^t_\text{Max}(a|I)  + \frac12 \tilde x^t_\text{PUCT}(a|I).
\end{align*}
Unlike GT-CFR as originally described~\cite{Schmid23:Student}, our one-sided GT-CFR works on the {\em game tree itself}, not the {\em public tree}. The public tree in our setting would be difficult to work with since the amount of common knowledge is very low.

\begin{figure}[!htbp]
    \centering
    \begin{forest}
        [,nat   
            [,p1,name=x1
                [,p2,name=a1 [{\sf F}] [{\sf G}]]
                [,p2,name=a2 [{\sf H}] [{\sf J}]]
            ]
            [,p1,name=x2
                [,p2,name=a3 [] []] []
            ]
            [,p1,name=x3
                []
                [,p2,name=b1 [] []]
            ]
            [,p1,name=x4
                [,p2,name=b2 [] []]
                []
            ]
            [,p1,name=x5 [] []]
        ]
        \draw[infoset2] (a1) -- (a3);
        \draw[infoset2] (b1) -- (b2);
        \draw[infoset1] (x2) -- (x3);
        \node[left=0pt of x1,draw=none,p1color]{\sf A};
        \node[left=0pt of x2,draw=none,p1color]{\sf B};
        \node[left=0pt of x4,draw=none,p1color]{\sf C};
        \node[left=0pt of b1,draw=none,p2color]{\sf E};
        \node[left=0pt of x5,draw=none,p1color]{\sf D};
    \end{forest}
    \caption{The game tree from \cref{fig:example2}, now with some nodes labeled, which will be referenced in the text.}
    \label{fig:example}
\end{figure}

Our one-sided GT-CFR, unlike PUCT and GT-CFR~\cite{Kocsis06:Bandit,Schmid23:Student}, is {\em not} guaranteed to eventually expand the whole game tree. For example, suppose that our game $\tilde\Gamma$ is as in \cref{fig:example}, and that both players are currently playing the strategy ``always play left''. Then Node~{\sf F} is reached by both players, nodes {\sf G} and {\sf H} are reached by only one of the two players ($\pmax$ and $\pmin$ respectively), and Node~{\sf J} is reached by neither player. As such, {\em Node~{\sf J} will not be expanded}, and if the current strategy is an equilibrium, this can be proven without knowing the details of any subtree that may exist at {\sf J}.

Nonetheless, we can still show an asymptotic convergence result:

\begin{theorem} For any given $\eps > 0$, the average strategy profile $(\bar x, \bar y)$ in one-sided GT-CFR eventually converges to an $\eps$-Nash equilibrium of any finite two-player zero-sum $\Gamma$.\footnote{Technically, $(\bar x, \bar y)$ is only a partial strategy in $\Gamma$, since it does not specify how to play after any unexpanded nodes. However, this is fine: {\em any} extension of $(\bar x, \bar y)$ will be an equilibrium of $\Gamma$, and unexpanded nodes are not reached by either player.  For clarity, as is typical for extensive-form games (see, \eg, Zinkevich et al.~\cite{Zinkevich07:Regret}), the average of strategies is always taken in sequence form. That is, $\bar x$ is the strategy for which $\bar x(h) = \frac{1}{T} \sum_{t=1}^T x^t(h)$.} 
\end{theorem}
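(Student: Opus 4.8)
The plan is to use finiteness to reduce the analysis to a \emph{fixed} game tree, apply the standard CFR convergence guarantee there, and then show that the never-expanded part of $\Gamma$ is irrelevant because it is unreached in the limit. Since $\Gamma$ is finite, each node can be expanded at most once, so only finitely many expansions ever occur; hence there is an (almost surely finite) time $T_0$ after which the expanded tree no longer changes, and I denote its limit $\tilde\Gamma^\infty$. For $t>T_0$, one-sided GT-CFR is simply PCFR+ run on the fixed tree $\tilde\Gamma^\infty$, whose non-terminal leaves carry fixed heuristic values. Because PCFR+ is a regret minimizer and the tree is fixed, the counterfactual-regret folk theorem gives that the average profile $(\bar x,\bar y)$ converges to a Nash equilibrium of the truncated game on $\tilde\Gamma^\infty$; the first $T_0$ iterations contribute only $O(T_0)=o(T)$ regret and wash out of the average.

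The crux is then to upgrade a Nash equilibrium of the truncated game to an $\eps$-Nash equilibrium of the true game $\Gamma$, i.e.\ to show that the heuristic values at never-expanded leaves cannot be exploited. For this I would prove the key structural lemma: every permanently-unexpanded leaf $h$ of $\tilde\Gamma^\infty$ is reached with vanishing average probability by \emph{both} players, so that $\bar x(h)\to 0$ and $\bar y(h)\to 0$. The mechanism is the expansion policy together with PUCT exploration. Since no expansion occurs after $T_0$, a Borel--Cantelli argument shows that the probability that the sampled expansion path from $(\tilde x^t,y^t)$ reaches $h$ is summable over $t$. The PUCT bonus $C\sigma^t(I,a)\sqrt{N^t(I)}/(1+N^t(I,a))$ blows up for under-selected actions, so every action at an infoset visited infinitely often during expansion is itself selected infinitely often; propagating this ``hotness'' down from the root, a leaf whose on-path opponent actions were played persistently would be visited, and hence expanded, infinitely often. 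A permanently-unexpanded $h$ therefore forces the non-exploring player's on-path reach to vanish along the exploring iterations, and since the roles of the two players alternate, swapping which player explores yields the symmetric conclusion for the mover.

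Finally I would combine these pieces. Because both players reach every unexpanded leaf with vanishing average probability, any unilateral deviation that tries to enter an unexpanded subtree is ``blocked'' by the other player: the deviator can force its own contribution to the reach to $1$, but the opponent (and chance) reach the entry node $h$ with probability at most $p(h)\bar y(h)$ (resp.\ $p(h)\bar x(h)$), which tends to $0$; since payoffs lie in $[-1,+1]$, the gain is at most $2p(h)\bar y(h)\to 0$. Summing over the finitely many unexpanded leaves and adding the $o(T)$ within-$\tilde\Gamma^\infty$ exploitability from the first step, the total exploitability of $(\bar x,\bar y)$ in $\Gamma$ tends to $0$. Thus for any $\eps>0$ the profile is an $\eps$-Nash equilibrium of $\Gamma$ for all large $T$, and any extension to the unexpanded subtrees leaves this unchanged since those nodes are off-path.

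I expect the main obstacle to be the structural lemma of the second paragraph, and specifically the transfer from ``visited finitely often during expansion'' to ``reached with vanishing probability by the average strategy.'' Two features make this delicate: PCFR+ is guaranteed to converge only in average, so the current iterates $(x^t,y^t)$ that drive expansion may oscillate rather than converge; and the expansion reach naturally controls the \emph{product} $x^t(h)\,y^t(h)$, whereas the exploitability bound needs each player's reach separately. The PUCT exploration guarantee is what decouples these, since it keeps the exploring player's path probability bounded away from $0$ along a subsequence (using that $\tilde x^t_\text{Max}$ shares the support of $x^t$ and that the PUCT component cycles through every action), forcing a never-expanded leaf to be starved by the non-exploring player. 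Carrying this induction cleanly along each root-to-leaf path, and across the alternating choice of exploring player, is where the real work lies.
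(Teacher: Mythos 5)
Your proposal is correct and follows essentially the same route as the paper's proof: the tree stabilizes by finiteness, CFR then converges on the fixed truncated tree, and permanently unexpanded nodes must be reached with vanishing probability by \emph{both} players (since otherwise the alternating exploring player, whose PUCT-perturbed strategy has full support, would have expanded them), so best-response values in the truncated tree coincide with those in $\Gamma$. The paper states this argument far more tersely; your elaboration of the ``vanishing reach'' lemma and the decoupling of the two players' reach probabilities is just a more careful rendering of the same idea.
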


\begin{proof}
Since $\Gamma$ is finite, eventually one-sided GT-CFR stops expanding nodes. At this time, let $\tilde\Gamma$ be the expanded game tree. Since no more nodes are expanded, and CFR is correct, one-sided GT-CFR eventually converges to an approximate Nash equilibrium $(\bar x, \bar y)$ of $\tilde\Gamma$. At this time, it is perhaps the case that there remain unexpanded nodes in the current tree $\tilde\Gamma$. However, any such nodes must have been played with asymptotic probability $0$ by {\em both} players; otherwise, if (say) $\pmin$ plays to an unexpanded node $h$ with asymptotically positive probability, then $h$ would have been expanded at some point when $\pmax$ was the expander. Thus, best-response values in $\tilde\Gamma$ are the same as they are in $\Gamma$, and therefore $(\bar x, \bar y)$ is also an approximate equilibrium in $\Gamma$.
\end{proof}

\subsection{Evaluating new leaves}\label{sec:evaluatenewleaves}

When a (non-terminal) leaf node $z$ of $\tilde\Gamma$ is selected, it is expanded. That is, all of its children are added to the tree. To assign utility values the children of $z$, we run the open-source engine {\em Stockfish 14}~\cite{:Stockfish}, in {\em MultiPV} mode, at depth $1$ on node $z$, which gives evaluations for all children of $z$ in a single call,\footnote{Using a single call has two minor advantages: first, it takes advantage of slight extensions that may be used in Stockfish at low depth; second, it reduces the overhead of calling Stockfish to one call per node being expanded, instead of one call per child of that node.} and clamp its result to $[-1, +1]$ in the same manner as done by ZS21~\cite{Zhang21:Subgame}.

When the children of $z$ are added to the tree, $z$ becomes a nonterminal node and hence will be placed in an infoset. If $z$ is the first node of its infoset to be expanded in $\tilde\Gamma$, we also need to initialize a new regret minimizer to be used by PCFR+ at this new infoset. Doing so naively would cause a sort of instability: the evaluation of $z$ will be (approximately) equal to the largest evaluation of any child of $z$ (due to how regular perfect-information evaluation functions work), but PCFR+ normally would initialize its strategy uniformly at random. Thus, the evaluation of $z$ would suddenly change to being the {\em average} of the evaluations of the children of $z$, which could be very different from the maximum (for example, if the move at $z$ is essentially forced). To mitigate this instability, we exploit the property that, in CFR (and all its variants, including PCFR+), the first strategy can be arbitrary. Conventionally it is set to the uniform random strategy, but we instead set it by placing all weight on the best child of $z$ as evaluated by {\em Stockfish}.

\subsection{Knowledge-limited subgame solving} ZS21 \cite{Zhang21:Subgame} uses {\em knowledge-limited subgame solving} (KLSS). KLSS results from two changes to common-knowledge subgame solving. Let $I$ be the current infoset of $\pmax$. As an example, consider the game in \cref{fig:example}, and let $I$ be the infoset at {\sf A}. Let $k$ be an odd positive integer. Then ZS21 \cite{Zhang21:Subgame} defines $k$-KLSS by making the following two changes.
\begin{enumerate}
    \item Nodes outside the downward closure $\overline{I^{k+1}}$ are completely removed from the game tree. In \cref{fig:example}, this would amount to removing the subtrees rooted at {\sf C, D}, and {\sf E}. 
    \item $\pmax$-nodes in $\overline{I^{k+1}} \setminus \overline {I^k}$ are frozen to their strategies in the blueprint, \ie, they are made into chance nodes with fixed action probabilities. In \cref{fig:example}, this amounts to making Node~{\sf B} a chance node.
\end{enumerate}
ZS21 sets $k=1$ in their FoW chess agent. Freezing the $\pmax$-nodes in $\overline{I^{2}} \setminus \overline {I^1}$ allows their equilibrium-finding module, which is based on linear programming, to scale more efficiently, since the nodes in that subtree are now only dependent on $\pmin$'s strategy, not $\pmax$'s.

KLSS, as implemented by ZS21, already lacks safety guarantees: they have an explicit counterexample in which using KLSS may decrease the quality of the strategy relative to just using the blueprint. We make one simple change to KLSS for {\em Obscuro}: we allow $\pmax$-nodes in $\overline{I^2} \setminus \overline I$ to be unfrozen and hence re-optimized in the subgame. We may call this $2$-knowledge-limited {\em unfrozen} subgame solving (KLUSS),\footnote{This can be easily generalized to $k$-KLUSS for any $k$.} since its complexity depends on the order-2 subgame $\overline{I^2}$. 
$2$-KLUSS essentially amounts to pretending that $\overline{I^2} = \overline{I^\infty}$.

We now make a few remarks about KLUSS.
\begin{enumerate}
    \item Like 1-KLSS, 2-KLUSS lacks safety guarantees in the worst case. However, KLSS is often safe in practice~\cite{Zhang21:Subgame}, and KLUSS outperforms KLSS in FoW chess as we showed in the ablations in \cref{sec:ablation}. There are two further considerations:
    \begin{enumerate}
        \item {\em Obscuro} does not {\em have} a full-game blueprint: its blueprint is simply the strategy from the previous timestep, which is depth limited. Thus, we {\em must} use some form of subgame solving to play the game. KL(U)SS is currently the only variation of subgame solving that is both somewhat game-theoretically motivated for imperfect-information games and computationally feasible in a game like FoW chess.
        
        \item Although both KLSS and KLUSS are unsafe in the worst case, it should be heuristically intuitive that they should improve performance {\em more} when the blueprint itself is of low quality. Indeed, we {\em expect} our ``blueprints'' (strategies carried over from the previous timestep) to have rather low quality, especially deep in the search tree where such strategies are based on very low-depth search! So, we believe heuristically that using KL(U)SS in this manner should usually be game-theoretically sound.
    \end{enumerate}
    \item Since our equilibrium-finding module for {\em Obscuro} is based on CFR instead of linear programming---in particular, it uses the full game tree $\tilde\Gamma$ instead of a sequence-form representation---it does not benefit from freezing the $\pmax$-nodes in $\overline{I^{2}} \setminus \overline {I^1}$, since those nodes would still need to be maintained. Thus, there is less reason for us to freeze those nodes. Further, with straightforward pruning techniques (namely, {\em partial pruning}~\cite{Brown15:Regret}), CFR iterations usually take {\em sublinear} time in the size of the game tree (unlike linear programming, which takes at least linear time in the representation size), reducing the need to optimize the size of the game representation.
    \item Again since we use CFR, the solutions that are computed by the equilibrium-finding module are inherently {\em approximate}, and especially at levels deep in the tree, their approximation can be relatively poor. As such, allowing these infosets to be unfrozen gives them the chance to learn better actions.
    \item 1-KLSS removes the nodes in $\overline{I^{2}} \setminus \overline {I^1}$, folding them into the sequence-form representation for efficiency. In contrast, our approach of maintaining these nodes allows them to be {\em selected for expansion}. This fixes a weakness of ZS21: ZS21 was only capable of searching for bluff opportunities ``locally'', since any $\pmax$-node in $\overline{I^{2}} \setminus \overline {I^1}$ would cease to be in the tree once the search horizon was passed. In contrast, {\em Obscuro} is capable of maintaining $\pmax$-nodes in $\overline{I^{2}} \setminus \overline {I^1}$ for a long time, allowing deeper bluff opportunities.
    \item \citet{Liu23:Opponent} introduced a {\em safe} variant of KLSS, which they call {\em safe KLSS}, in which the subgame solver attempts to find a subgame strategy $x'$ that maintains at least the same value for every {\em opponent strategy} $y$, instead of against every {\em infoset} $J$. This is a much stricter condition that is much more difficult to satisfy and thus substantially constrains the strategy to be close to the blueprint. Therefore, the safety requirement significantly decreases the power and value of subgame solving, especially when the blueprint is bad. Moreover, safe KLSS drops all nodes outside $\overline {I^1}$, which once again introduces the problem of the previously-listed item: if we were to use safe KLSS in our setting, our AI would not be capable of exploiting long bluff opportunities.
\end{enumerate}

\subsection{Selecting an action}\label{sec:selectaction}
As mentioned in the body, {\em Obscuro} selects its action using the {\em last iterate} of PCFR+, rather than the average iterate which is known to converge to a Nash equilibrium. We do this for two reasons.
\begin{enumerate}
    \item The stopping time of the algorithm, due to the inherent randomness of processor speeds, is already slightly randomized. Thus, stopping on the last iterate does not actually stop at the same timestep $T$ every time: it in effect mixes among the last few strategies. Thus, we do not need to actually randomize ourselves to gain the benefit of randomization.
    
    \item PCFR+ is conjectured (\eg,~\citet{Farina24:Regret}) to exhibit last-iterate convergence as well. Indeed, we measured the Nash gap of the last iterate $(x^T, y^T)$ (in the expanded game $\tilde\Gamma$), and the typical Nash gap was approximately equivalent to half a pawn---much less than the reward range of the game. This suggests that assuming last-iterate convergence is not unreasonable for our setting.
\end{enumerate}

\subsection{Strategy purification}\label{sec:purify}

As mentioned in the body, we partially {\em purify} our strategy before playing. When {\em Maxmargin} is used as the subgame solving algorithm (\ie, when the margins are all nonnegative), we allow mixing between $k=3$ actions; when {\em Resolve} is used, we deterministically play the top action. Moreover, we only allow mixing among actions other than the highest-probability action if they have appeared continuously in the support of $x^t$ for every iteration $t > T_{1/2}$, where $T_{1/2}$ is chosen to be the iteration number when half the time budget elapsed.\footnote{It will almost always be the case that $T_{1/2} < T/2$. This is because, as the game tree grows larger, PCFR+ iterations, whose time complexity scales with the size of the game tree, get slower.} We call such actions ``stable''. These restrictions reduce the chance that transient fluctuations in the strategy of the player, which occur commonly during game solving especially with an algorithm like PCFR+, would affect the final action that is played. Any probability mass that was assigned to actions that are excluded in the above manner is shifted to the action with highest probability.

\section{Hardware} {\em Obscuro}, for its human matches, ran on a single desktop machine with a 6-core Intel i5 CPU. Ablations and further matches were run on an AMD EPYC 64-core server machine using 10 cores (5 per side). We now report statistics about the computational performance of {\em Obscuro}. These statistics were collected over the course of a 1,000-game sample, at a time control of 5 seconds per move.\footnote{For this and all other AI-vs-AI matches in this paper, the stated time control, usually 5 seconds per move, is the time limit allocated to the main search loop, and does {\em not} include the time it takes to enumerate the set of all legal positions.}
\begin{itemize}%
    \item Average game length: 116.6 plies (58.3 full moves)
    \item Average search depth: 10.7 plies
    \item Average search tree size: 1,070,552 nodes, 14,404 infosets
    \item Average search tree size carried over from previous search: 181,421 nodes, 3,162 infosets
    \item Average number of possible positions: 17,264
\end{itemize}

\section{Observations about FoW chess}

\subsection{Size of infosets and common-knowledge sets} \label{sec:fow-infosetsize}

Here we elaborate on the discussions about common-knowledge sets and infosets, alluded to in the introduction. 

Consider the family of positions in which both sides have spent the first eight moves playing \chessmoves{1. a4 a5 2. b4 b5 ... 8. h4 h5}, and subsequently shuffle all their remaining pieces around their first three ranks. An example of such a position is in \cref{fig:ck-large}. Each player must have one bishop on a light square ($12$ ways), one bishop on a dark square ($12$ ways), one queen, one king, two knights, and two rooks ($22 \cdot 21 \cdot 20 \cdot 19 \cdot 18 \cdot 17 / 2^2$ ways). When multiplied, this gives a total of approximately $M = 2 \times 10^9$ ways. This is a lower bound on the maximum size of an infoset. For common-knowledge sets, {\em both} players can arrange their pieces arbitrarily along the first three ranks, yielding approximately $M^2 \approx 4 \times 10^{18}$ different arrangements, which provides a lower bound on the maximum size of a common-knowledge set.\footnote{These common-knowledge sets are measured with respect to {\em states}, not {\em histories}. Measuring common-knowledge sets with {\em histories} would result in a significantly larger number, because the order of the moves would matter.}

\begin{figure}[!htbp]
    \centering
    \begin{tabular}{cccc}
    \chessboard[addfen=q2k3r/4r2b/3nnb2/pppppppp/PPPPPPPP/1N4RB/Q5R1/2K2NB1 w HAka - 0 9, padding=0em, pgfstyle=border, color=p1color, showmover=false]
    \end{tabular}
    \caption{FoW chess position illustrating the existence of large infosets and common-knowledge sets. A full explanation is given in the text.}
    \label{fig:ck-large}
\end{figure}

Although infosets {\em can} get this large, they almost never {\em do} in practical games, because both sides are making effort to obtain information.

We now elaborate on \cref{fig:common-knowledge}. In particular, we will show that the two positions in that figure are in the same common-knowledge set. Consider the sequence of positions in \cref{fig:ck-path}, read in order from top-left to bottom-right. The positions marked A and B are the same as those in in \cref{fig:common-knowledge}. Each position is connected to the next one by an infoset of one of the players: the first pair by a White infoset, the second pair by a Black infoset, and so on. A computer search showed that the depicted path, which has length $9$, is the shortest path between these two positions.\footnote{A similar computer search shows that this is nearly the longest possible shortest path between any pair of nodes after two moves from each side: there is a shortest path of length 10, but no shortest paths longer than that.} Hence, if the true position is A, then the statement $Y =$ ``The true position is not B'' is 8th-order knowledge for both players. That is, it is true that
\begin{align*}
\underbrace{\text{everyone knows everyone knows ... everyone knows}}_\text{8 repetitions} Y
\end{align*}
yet the same statement would be false if there were $9$ repetitions,  so $Y$ is not common knowledge.

\begin{figure}[!htbp]
    \centering\scriptsize
\newcommand{\boardsize}{0.39}
\setlength{\tabcolsep}{0pt}
\begin{tabular}{ccccc}
\scalebox{\boardsize}{\chessboard[addfen=rnbqkbnr/ppp1pp1p/8/3p2p1/8/2N4N/PPPPPPPP/R1BQKB1R w KQkq - 0 3, padding=0em, pgfstyle=border, color=p1color,markfield={h3,g5,d5}, showmover=false]}&
\scalebox{\boardsize}{\chessboard[addfen=rnbqkbnr/ppp1pp1p/8/3p2p1/8/N6N/PPPPPPPP/R1BQKB1R w KQkq - 0 3, padding=0em, pgfstyle=border, color=p1color,markfield={h3,g5}, showmover=false]}&
\scalebox{\boardsize}{\chessboard[addfen=rnbqkbnr/1ppppp1p/8/p5p1/8/N6N/PPPPPPPP/R1BQKB1R w KQkq - 0 3, padding=0em, pgfstyle=border, color=p1color,markfield={g5}, showmover=false]}&
\scalebox{\boardsize}{\chessboard[addfen=rnbqkbnr/1ppppp1p/8/p5p1/8/N7/PPPPPPPP/1RBQKBNR w Kkq - 0 3, padding=0em, pgfstyle=border, color=p1color, showmover=false]}&
\scalebox{\boardsize}{\chessboard[addfen=rnbqkbnr/1ppppppp/8/8/p7/N7/PPPPPPPP/1RBQKBNR w Kkq - 0 3, padding=0em, pgfstyle=circle, color=p2color, markfield={a3}, showmover=false]}\\
\chessmoves{1. Nc3 g5 2. Nh3 d5}&
\chessmoves{1. Na3 g5 2. Nh3 d5}&
\chessmoves{1. Na3 g5 2. Nh3 a5}&
\chessmoves{1. Na3 g5 2. Rb1 a5}&
\chessmoves{1. Na3 a5 2. Rb1 a4}\\
A
\\
\scalebox{\boardsize}{\chessboard[addfen=rnbqkbnr/1ppppppp/8/8/p7/N6N/PPPPPPPP/R1BQKB1R w KQkq - 0 3, padding=0em, pgfstyle=circle, color=p2color,markfield={a3}, showmover=false]}&
\scalebox{\boardsize}{\chessboard[addfen=rnbqkbnr/ppppppp1/8/8/7p/N6N/PPPPPPPP/R1BQKB1R w KQkq - 0 3, padding=0em, pgfstyle=circle, color=p2color,markfield={h3}, showmover=false]}&
\scalebox{\boardsize}{\chessboard[addfen=rnbqkbnr/ppppppp1/8/8/7p/2N4P/PPPPPPP1/R1BQKBNR w KQkq - 0 3, padding=0em, pgfstyle=circle, color=p2color,markfield={h3,h4}, showmover=false]}&
\scalebox{\boardsize}{\chessboard[addfen=rnb1kbnr/pppp1ppp/8/4p3/7q/2N4P/PPPPPPP1/R1BQKBNR w KQkq - 1 3, padding=0em, pgfstyle=circle, color=p2color,markfield={h4}, pgfstyle=border, color=p1color,markfield={h3,f2}, showmover=false]}&
\scalebox{\boardsize}{\chessboard[addfen=rnb1kbnr/pppp1ppp/8/4p3/7q/5N1P/PPPPPPP1/RNBQKB1R w KQkq - 1 3, padding=0em, pgfstyle=border, color=p1color, markfields={h4,h3,f2,e5}, showmover=false]}\\
\chessmoves{1. Na3 a5 2. Nh3 a4}&
\chessmoves{1. Na3 h5 2. Nh3 h4}&
\chessmoves{1. Nc3 h5 2. h3 h4}&
\chessmoves{1. Nc3 e5 2. h3 Qh4}&
\chessmoves{1. Nf3 e5 2. h3 Qh4}
\\
&&&&B
\end{tabular}
    \caption{Sequence of positions illustrating the connectivity between the two positions in \cref{fig:common-knowledge}. {\color{p2color} Circles} mark squares that the opponent knows are occupied by {\em some} piece, but not by {\em which} piece. A full explanation is given in the text.}
    \label{fig:ck-path}
\end{figure}

\subsection{Mixed strategies} Playing a mixed strategy is a fundamental part of strong play in almost any imperfect information game, and it is particularly important in games like FoW chess where there is no private information assigned by chance, such as private cards in poker. Indeed, in small poker endgames, deterministic strategies exist for playing near-optimally~\cite{Farina22:Fast}. However, in FoW chess, if a player plays a pure strategy that the opponent knows, the opponent would essentially be playing regular chess, because the opponent can predict with full certainty what the player would play. This is a significant disadvantage that will result in a rapid loss against any competent opponent.

Consider, for example, the position in \cref{fig:c4d5}(A). White can win almost a full pawn (in expectation) by mixing between the moves \chessmoves{\color{p2color} 2.~Qa4} with low probability and \chessmoves{\color{p1color} 2.~Nc3} with high probability. No move for Black simultaneously defends the threats against both the king and the pawn. (\chessmoves{2...~c6} may look like it does, but after \chessmoves{3.~cxd5}, Black cannot recapture the pawn without risking hanging a king or queen.)\footnote{{\em Obscuro} prefers to also include \chessmoves{3.~Nf3} and \chessmoves{3.~e3} in its mixed strategy to dissuade \chessmoves{2...~d4}.}

\begin{figure}[!htbp]
    \centering
    \newcommand{\boardscale}{0.49}
\setlength{\tabcolsep}{0pt}
    \begin{tabular}{cccc}
    \scalebox{\boardscale}{\chessboard[addfen=rnbqkbnr/ppp1pppp/8/3p4/2P5/8/PP1PPPPP/RNBQKBNR w KQkq d6 0 2, padding=0em, pgfstyle=straightmove, color=p2color, markmoves={d1-a4}, color=p1color, markmoves={b1-c3}, showmover=false]}&
    \scalebox{\boardscale}{\chessboard[addfen=r1b1k2r/pp2qppp/2np1n2/2p5/Q4B2/2P5/PP2PPPP/R3KBNR w KQkq - 0 9, padding=0em, pgfstyle=straightmove, color=p2color, markmoves={f4-d6}, pgfstyle=border,  color=p1color, markfield={a4,d1}, showmover=false]}&
    \scalebox{\boardscale}{\chessboard[addfen=3kb1Q1/1p3q2/nP2pN2/3pP3/3P4/8/5P1K/8 w - - 13 68, padding=0em, pgfstyle=straightmove, color=p2color, markmoves={g8-g1,g1-a1,a1-a7,a7-a8}, showmover=false]}&
    \scalebox{\boardscale}{\chessboard[addfen=8/3r1k2/2p1bpp1/p1B4p/1n3P2/1P4P1/7P/4RBK1 w - - 2 35, padding=0em, pgfstyle=straightmove, color=p2color, markmoves={e1-e6,f7-e6}, color=p1color, markmoves={f1-c4,f1-h3}, showmover=false]}\\
    \figurelabelfont A & \figurelabelfont B & \figurelabelfont C & \figurelabelfont D
    \end{tabular}
    \caption{FoW chess positions from actual gameplay illustrating common themes. {\bf (A)} Opening position after the common trap \chessmoves{1.~c4 d5?!} {\bf (B)} An early-game bluff. White bluffs that its attacking bishop is defended by the queen on \chessmoves{d1}. {\bf (C)} A highly-risky queen maneuver from a losing position. {\bf (D)}  An endgame position in which the disadvantaged side sacrifices material for a chance at the opposing king. Details can be found in the text.}
    \label{fig:c4d5}
\end{figure}

This necessity of playing a mixed strategy explains why we do not adopt full purification of our strategy  and instead opt to allow mixing. 

\subsection{First-mover advantage} We evaluated the first-mover advantage in FoW chess by running 10,000 games with {\em Obscuro} playing against itself at a time control of 5 seconds per move. Of these games, White scored 57.5\% (+4935 =1623 -3442). This is, with statistical significance ($z>5$), larger than the empirical first-move advantage in regular chess, which is about 55\%~\cite{ChessGames24:Stats}. We believe that the fundamental reason for this discrepancy is the weakness of the \chessmoves{a4-e8} diagonal, as already exhibited in \cref{fig:c4d5}(A), discussed above. This risk presents Black from developing in a natural manner against \chessmoves{1.~c4} or \chessmoves{1.~d4}, allowing White a healthy opening lead. 

Indeed, our 10,000-game sample included 10 games with length 12 ply (6 moves from each player) or fewer; all 10 of these games ended with either Black failing to cover \chessmoves{Qa4+} or White failing to cover \chessmoves{Qa5+}:
\begin{itemize}
\item \chessmoves{1. c4 d5 2. Qa4+ d4} 1-0
\item \chessmoves{1. c4 c6 2. d4 d5 3. cxd5 Qa5+ 4. Qa4} 0-1
\item \chessmoves{1. c4 Nc6 2. d4 d5 3. Qa4 dxc4 4. d5 Nb8} 1-0
\item \chessmoves{1. d4 c6 2. c4 d5 3. cxd5 Bf5 4. Qa4 cxd5} 1-0 \ ({\em This play-through occurred three times.})
\item \chessmoves{1. d4 c6 2. c3 e6 3. e4 d5 4. e5 c5 5. Qa4+ cxd4} 1-0
\item \chessmoves{1. c4 e6 2. d4 c5 3. d5 Qa5+ 4. Nd2 Nf6 5. e4 Nxe4 6. Nxe4} 0-1
\item \chessmoves{1. d4 c6 2. Nc3 d5 3. Qd3 Nf6 4. e4 dxe4 5. Nxe4 Qa5+ 6. Nxf6+} 0-1
\item \chessmoves{1. c4 d5 2. Qa4+ c6 3. cxd5 Nf6 4. dxc6 Nxc6 5. Nf3 e5 6. Nxe5 Nxe5} 1-0
\end{itemize}
These games may seem like they contain major mistakes, but that is not so. It is rather likely that {\em most or all of these play-throughs are part of optimal play}: after all, bluffs must sometimes get called!

In Table~\ref{tab:firstmove} we give {\em Obscuro}'s mixed strategy on the first move for both White and Black, over the 10,000-game sample. The above observation about the \chessmoves{a4-e8} diagonal has a large effect on opening choices. We believe that this explains why White strongly prefers opening with \chessmoves{d4} and \chessmoves{c4} rather than \chessmoves{e4} which is equally favored in regular chess, and why Black almost never opens with \chessmoves{d5} and instead prefers to immediately close the dangerous diagonal by moving something to \chessmoves{c6}.

\begin{table}
    \centering
    \begin{tabular}{rr|rr}
     \multicolumn{2}{c|}{White} & \multicolumn{2}{c}{Black} \\\hline
     \chessmoves{d4} & 66.4\%  & \chessmoves{Nc6} & 32.5\% \\
     \chessmoves{c4} & 29.6\% & \chessmoves{c6} & 25.1\%\\
     \chessmoves{e4} & 1.9\% & \chessmoves{e6} & 20.7\% \\
     \chessmoves{Nc3} & 1.4\% & \chessmoves{Nf6} & 15.9\% \\
     \chessmoves{c3} & 0.4\% & \chessmoves{c5} & 4.8\% \\ 
     \chessmoves{Nf3} & 0.2\% & \chessmoves{d5} & 0.9\% 
    \end{tabular}
    \caption{Distribution of first moves played by {\em Obscuro} as both White and Black, over a 10,000-game sample. Percentages may not add up to 100\% due to rounding.}
    \label{tab:firstmove}
\end{table}

\subsection{Bluffs} {\em Obscuro} bluffs. An example bluff is in \cref{fig:c4d5}(B), which is from the aforementioned 10,000-game sample. White knows that \chessmoves{d6} is defended (in fact, White knows the exact position). Black does not know the location of the white queen (for example, it could be on \chessmoves{\color{p1color} d1} instead). This allows White to play \chessmoves{\color{p2color} Bxd6}, exploiting the fact that Black cannot recapture without risking losing the queen.

\subsection{Probabilistic tactics and risk-taking} 

The existence of hidden information in FoW chess allows tactics that would not work in regular chess. An example of this phenomenon as early as move 2 has already been described above, where mixing allows White to win a pawn after \chessmoves{1. c4 d5}. We now give additional examples.

\cref{fig:c4d5}(C) depicts a position encountered during our 20-game match against the top-rated human. {\em Obscuro} (White) was in a losing position, down a minor piece. It decided to play the highly risky queen maneuver \chessmoves{\color{p2color} Qg8-g1-a1-a7-a8}, leaving its own king exposed in order to attempt to hunt the opposing king. This risky tactic worked: the game played out \chessmoves{68. Qg1 Qe7 69. Qa1 Nb8 70. Qa7 Nd7 71. Qa8+ Nxb6} 1-0.\footnote{The immediate \chessmoves{70. Qa8} would have worked in this position as well, but it was not played, likely because it would have risked losing the queen in case the king were on \chessmoves{b8}.} This sequence of moves heavily exploits the opponent's imperfect information: if Black knew that White was attempting this attack, Black could easily either defend the attack or launch a counterattack on the completely undefended white king.

For another example, consider the position in \cref{fig:c4d5}(D), again from the aforementioned 10,000-game sample, and suppose for the sake of the example that White has perfect information. White faces a slight material disadvantage in an endgame. However, {\em Obscuro} as White finds the tactical blow \chessmoves{{\color{p2color} 1. Rxe6! Kxe6}} upon which mixing evenly between \chessmoves{\color{p1color} 2. Bc4+} and \chessmoves{\color{p1color} 2. Bh3+} wins on the spot with 50\% probability.

\subsection{Exploitative vs. equilibrium play}
The position in \cref{fig:c4d5}(D) is also an example of the difference between {\em exploitative} play and {\em equilibrium} play in FoW chess. The above tactic has expected value at least 50\% against any player, because it wins on the spot with probability at least 50\%. It is likely the best move if playing against a perfect opponent. However, against a substantially weaker player, it may be far from the best move: against a weak player, one can argue that the endgame is probably a win even with the slight material disadvantage, whereas the tactic will lead to a significant disadvantage (down three points of material) if it fails to win. Therefore, if one knew the strength of one's opponent, one may opt to not go for this tactic and instead attempt to win the endgame in a ``safer'' manner. Another example of this phenomenon was also seen above. {\em Obscuro}, with small probability, can lose in two moves (\chessmoves{1. c4 d5 2. Qa4+ d4}). Any player, no matter how weak, can therefore beat {\em Obscuro} with positive probability as White by simply playing the above move sequence. However, against opponents below a certain level, playing the above moves as Black may be considered a needless risk.

{\em Obscuro} does not know or attempt to model the opponent. It will simply play what it believes to be a near-equilibrium strategy. Therefore, it may not do as well against weak players as an agent designed specifically to exploit weak players. This design choice was intentional, and follows other efforts in superhuman game-playing AI such as those mentioned in the introduction, most of which attempted to find and play equilibria rather than to exploit a particular opponent. 

\subsection{Volatility} FoW chess is a highly volatile, highly stochastic game. Indeed, the previous two observations regarding risk taking and exploitative play are evidence of this. Most games, including a majority of our 20 games against the world \#1 player, are ultimately decided by one side outright ``blundering material'' because of lack of knowledge of the opponent's position. We emphasize, however, that this is not a sign of poor quality of play; rather, we believe that strong play in FoW chess involves calculated risk-taking that, with nontrivial probability, leads to such ``blunders''. More skilled players are better at taking calculated risks while restricting the probability of losing material, and at forcing their opponents into more risky situations.

\subsection{King vs King} 
To make some of the above discussions about mixing, volatility, and equilibrium play more concrete, we include here a partial analysis of the king-vs-king endgame, assuming the starting position of the kings is common knowledge. While this endgame is an immediate draw in the rules of regular chess (because a lone king cannot checkmate), FoW chess allows such endgames to play out, and not all such endgames are immediately drawn; in fact, the analysis turns out rather intricate already. In the below discussion, $0$ is a draw, $+1$ is a certain win for White, and $-1$ is a certain win for Black.

\begin{claim} Suppose that there are two legal moves for the black king that are 1)  guaranteed to be safe (\textit{i.e.}, guaranteed to not immediately lose), and 2) adjacent to each other (orthogonally or diagonally). Then Black secures at least a draw. 
\end{claim}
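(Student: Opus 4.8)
The plan is to show that Black can avoid capture forever by permanently confining its king to a single $2\times2$ block of squares, so that the game is eventually drawn by repetition or the fifty-move rule (both of which, by the stated rules, trigger automatically). Writing $N(s)$ for the set of (up to eight) squares adjacent to a square $s$, recall that in this endgame a player loses exactly when, at the start of the opponent's turn, its king is adjacent to the opponent's king, since the opponent then observes it and captures. I would first record the combinatorial fact that any three pairwise-adjacent squares in the king-move graph are three of the four corners of a unique $2\times2$ block. The current black-king square $b$ and the two safe targets $p,q$ are pairwise adjacent ($p,q$ are king steps from $b$, and $p\in N(q)$ by hypothesis), so they determine such a block $B^* = \{b,p,q,r\}$, all four of whose squares are mutually adjacent. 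Moreover, if the game has not already ended, the white king is outside $B^*$: otherwise it would be adjacent to $b$ and Black would capture it immediately.

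The engine of the argument is a geometric lemma: \emph{any square outside a $2\times2$ block is adjacent to at most two of the block's four squares}. (An outside king faces the block across a single edge or corner of the enclosing $2\times2$ region, so it can touch at most the two squares along that side.) Hence, as long as the white king is outside $B^*$, at least two of the four block squares are non-adjacent to it and therefore safe; and since all four block squares are mutually adjacent, Black's king — sitting on a block square — can always step onto such a square. Black's strategy is to never leave $B^*$ and always move to a safe block square. I would maintain the invariant that at the start of each black turn the black king lies in $B^*$ and the white king lies outside $B^*$. This is preserved: after Black's safe move the white king is not adjacent to the black king, so White cannot capture; and White cannot step \emph{into} $B^*$, because any block square other than the black king's square is adjacent to it, so such a move would let Black capture on the reply. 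Thus White stays outside, Black is never captured, and the black king visits only four squares with no captures or pawn pushes, forcing a draw.

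The genuine difficulty is the imperfect-information bookkeeping. The geometric lemma only guarantees that a safe square \emph{exists}; to play the strategy Black must \emph{know} which block square is safe, using only the information it actually observes, namely that the white king is not among the eight neighbors of the current black-king square. A single fresh observation does not certify the neighborhood of a different block square, and naive play (e.g.\ oscillating along one edge of $B^*$) could leave two orthogonal sides of the block permanently unobserved, so that Black's belief set becomes adjacent to all three reachable block squares at once, leaving no \emph{provably} safe move. Resolving this is the crux.

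To handle it I would track the set $W$ of white-king squares consistent with Black's observation history, updated each turn by first deleting the observed neighborhood of the black king and then dilating by one step to account for White's unit move speed. A block square is a certified-safe target exactly when its neighborhood is disjoint from $W$, so it suffices to prove the invariant that $W$ is never adjacent to more than two of the four block squares. I would establish this by prescribing that Black cycle its king around the four corners of $B^*$, so that its $3\times3$ observation windows repeatedly sweep the entire $4\times4$ collar $N(B^*)$; because the block is small relative to Black's observation radius while the white king advances only one square per move, a unit-speed evader cannot re-contaminate a side of the block faster than Black reclears it, and the belief set can never wrap around $B^*$. Verifying this information-set invariant — that Black's uncertainty can never surround the block — is the main obstacle; once it is in place, the geometric lemma and the game-flow invariant above complete the proof.
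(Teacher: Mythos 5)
Your approach has a genuine gap, and it is exactly the one you flag as the crux: the invariant that Black can always certify a safe block square is not merely unproven, it is false. Each corner $s$ of the $2\times 2$ block has in its neighborhood an ``outer corner'' collar square (diagonally outward from $s$) that is observable only from $s$ itself; the outward neighbors of that collar square are never observed from anywhere in the block, so they remain in $W$ permanently, and one dilation step after Black leaves $s$ that collar square re-enters $W$. Hence Black can never re-certify $s$ (nor certify any block square it has not already stood on), and the certified-safety strategy is stuck after at most one move. Nor is this an artifact of a poor clearing schedule: any deterministic Black strategy is doomed in this endgame, because White knows its own position, can therefore simulate Black's observation history exactly, predict Black's entire trajectory, and ambush by moving to a square adjacent to Black's \emph{next} square but not its current one. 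So no deterministic, belief-tracking argument can establish the claim.

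The paper's proof is instead a one-line randomization argument. Black moves to $p$ or $q$ with probability $1/2$ each on its first move and thereafter shuffles deterministically between them; the hidden coin flip fixes the parity forever, so White's posterior on $\{p,q\}$ stays at $1/2$--$1/2$. For White ever to capture, it must at some point place its king adjacent to $p$ or to $q$. If that square is adjacent to both, Black observes it and captures with probability $1$; if it is adjacent to exactly one, Black captures with probability $1/2$, which offsets the at-most-probability-$1/2$ win White can obtain in the complementary branch, so White's expected value from any approach is at most $0$. If White never approaches, the game ends in an automatic fifty-move-rule draw. The key point you missed is that ``secures at least a draw'' is a statement about the \emph{expected} value of the game --- Black does lose with probability $1/2$ when White commits to an approach --- which is precisely why randomization is both sufficient and, by the predictability argument above, necessary.
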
 
\begin{proof}
     Black randomly moves to one of them on their first move, and shuffles between them forever thereafter. The white  king cannot approach  without being captured half of the time.
\end{proof}

Thus, it remains only to discuss the case where one king is on the edge of the board. Assume, without loss of generality, that this is the black king, and that it is on the 8th rank.

\begin{claim} If the white king prevents the black king from immediately moving off the back rank (\eg, \chessmoves{a6} and \chessmoves{a8}), the equilibrium value is {\em strictly} positive, regardless of which side is to move.
\end{claim}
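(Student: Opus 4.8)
The plan is to prove the claim by exhibiting a (randomized) strategy for White that guarantees strictly positive expected value against every Black strategy. Since the value of the game equals $\max_x \min_y u(x,y)$ and payoffs lie in $\{-1,0,+1\}$, it suffices to describe a White strategy whose expected payoff is bounded below by some $\varepsilon > 0$; equivalently, White can guarantee never losing while still capturing the black king with probability at least $\varepsilon$. I will take the pinned position to be the white king on a6 and the black king on a8, and I note that the parity (which side is to move) only shifts the forced opening tempo, so the two cases can be treated together.

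The engine of the proof is a combinatorial \emph{forced-guess} observation about the cornered black king. From a8 the only legal king moves are to a7, b7, and b8, and a move to a7 can be punished only by a white king on a6 or b6, a move to b8 only by a white king on c7 or c8, and a move to b7 by any of these. The asymmetry that drives everything is that the squares a7, b7 punishing the near side are exactly the squares the a8-king \emph{observes}, whereas the flanking squares c7, c8 punishing b8 are \emph{not} observed from a8 (they are seen only from b8). Hence, if Black's information set about the white king simultaneously contains a near square (a6 or b6) and a far flanking square (c7 or c8), then every one of Black's three moves is adjacent to White's true square for a positive-probability part of White's randomization, so \emph{no} black move is safe. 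Taking the clean pincer in which White is on a6 or on c7 each with probability $1/2$, each pure black move is captured with probability exactly $1/2$, and since the surviving branch merely returns to a pinned position from which White has lost nothing, White's value at such a node is at least $\tfrac12\cdot 1 + \tfrac12\cdot 0 > 0$. This asymmetry---the edge deprives the \emph{defender}, but not the attacker, of a hidden flanking resource---is precisely why the value is positive for White rather than drawn.

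First I would verify the forcing that keeps Black pinned: from a8 the unique safe move is b8 (both a7 and b7 touch a6), and after White answers with the tracking move to b6 the same situation recurs, so Black is confined to the back rank and can never drop safely to the seventh rank while White tracks on the sixth. During this bouncing I would let White secretly randomize: in one branch White loiters on a6/b6, while in the other branch, at a moment when Black sits on a8, White slips c6--c7 (a step that the a8-king neither observes nor can answer, since c7 is not adjacent to a8). Because White stays non-adjacent to Black's trajectory throughout, Black sees nothing in either branch and must play a move that is safe in both; consequently the two terminal white placements a6 and c7 lie in a single black information set, and the forced-guess observation of the previous paragraph applies.

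The hard part will be making this coordination rigorous: I must show that White can always engineer a moment at which Black is on a8 (the always-available safe retreat from b8) while White's secret flank c6--c7 is correctly aimed, and that Black cannot indefinitely evade by drifting along the back rank toward open space. The relevant supporting facts cut both ways---a rank-six white king never cuts off either horizontal back-rank move, so Black controls direction, yet Black can never leave the back rank and the corner a8 is always reachable-safe from b8, so White can repeatedly re-pose the pincer until Black is forced to guess at least once; a uniform lower bound on the per-attempt capture probability then yields the desired $\varepsilon > 0$. The subtleties I would check most carefully are the indistinguishability of the two branches along the \emph{entire} observation history (not just at the final node), the safety constraint that White's flanking step to the seventh rank is never taken while adjacent to Black (else Black captures first), and the bookkeeping for the opposite parity, where Black's forced b8 simply occurs one tempo earlier.
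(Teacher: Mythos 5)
Your proposal takes a genuinely different route from the paper's: you try to exhibit a \emph{single} randomized White strategy (a corner ``pincer'' mixing between \chessmoves{a6} and \chessmoves{c7}) with a uniform positive value, whereas the paper argues by cases on \emph{Black's} strategy---if Black ever steps off the back rank with positive probability, White punishes the most likely landing square on the seventh rank and otherwise retreats to the drawing shuffle of Claim~1; if Black stays on the back rank forever, White exploits the fact that Black's file \emph{parity} is then forced, walks to \chessmoves{$x$7} above the \emph{least} likely back-rank square ($q \le 1/4$), and chases along the seventh rank, securing value at least $1-2q \ge 1/2$. Your pincer calculation at the corner is correct as far as it goes (each of \chessmoves{a7}, \chessmoves{b7}, \chessmoves{b8} is adjacent to at least one of \chessmoves{a6}, \chessmoves{c7}), but the argument has two genuine gaps that are not mere bookkeeping.

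First, Black is confined to the back rank but \emph{not} to the corner. From \chessmoves{b8} against a White king believed to be on the sixth rank, both \chessmoves{a8} and \chessmoves{c8} are safe, so Black can march \chessmoves{a8}--\chessmoves{b8}--\chessmoves{c8}--$\cdots$--\chessmoves{h8}; against the strategy you describe (loiter near the a-file or sneak to \chessmoves{c7}), Black reaches the far corner, where \chessmoves{g7} and \chessmoves{g8} are two adjacent squares that are safe against every White position in Black's information set, and Claim~1 gives Black a draw. So your strategy does not force the guess; you acknowledge that ``Black controls direction'' but the claim that White ``can repeatedly re-pose the pincer'' is exactly the step that fails. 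Second, White cannot \emph{observe} Black's back-rank square (only its parity is determined), so White cannot time the step \chessmoves{c6}--\chessmoves{c7} for ``a moment when Black sits on \chessmoves{a8}'': if Black happens to be on \chessmoves{b8} or \chessmoves{c8}, Black observes \chessmoves{c7}, and White's sneak loses outright. Your strategy therefore loses with positive probability, and you give no accounting showing the win probability dominates---precisely the accounting the paper's Case~2 performs by targeting the \emph{minimal}-probability square among the four of the forced parity class. Repairing your argument would essentially require importing both the case split and the parity chase, at which point it collapses into the paper's proof.
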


\begin{proof}
We will show that Black has no strategy that achieves expected value $0$. Consider two cases.

{\em Case 1.} Black's strategy involves attempting to move off the back rank with positive probability on some move $t$ (but not earlier). Then consider the following strategy for White. Let \chessmoves{$x$7} (for $x \in \{\textbf{a, b, ..., h}\}$) be the square on the $7$th rank with maximal probability $p > 0$ for the black king after $t$ moves. White places its king on square \chessmoves{$x$6} before Black's $t$th move. With probability $p$, White wins immediately. Otherwise, White runs away downwards, executing the strategy from Claim 1, forcing a draw. 

{\em Case 2.} Black's strategy is to always stay on the back rank. Then consider the following strategy for White. Let \chessmoves{$x$8} be the square on the back rank with {\em minimal} probability $q \le 1/4$ for the black king, at the time when White makes its $8$th move. White places its king on \chessmoves{$x$7} on its $8$th move, then moves left and right on the 7th rank until it wins. We claim that White has expected value at least $1-2q = 1/2$ with this strategy. To see this, note that, since Black always stays on the back rank, the {\em parity} of its rank alternates between moves; therefore, if the black king is {\em not} on \chessmoves{$x$8}, then White will not lose on its $8$th move. Further, also by a parity argument, White will eventually chase down the black king and win the game.
\end{proof}

If the black king is on the edge of the board, it is always the case that either White can force the kings to  be two squares apart with common knowledge (Claim 2) or Black has a safe pair of adjacent moves (Claim 1), so this completes the analysis. 

We complete this section by pointing out an interesting special case: If the black king starts in the corner (\chessmoves{a8}), the white king starts on either \chessmoves{b6, c7}, or \chessmoves{c6}, and it is White to move, then White can secure value strictly larger than $1/2$: randomize between \chessmoves{Kb6, Kc7, Ka6}, or \chessmoves{Kc8} (whichever are legal moves) on the first move. This wins with probability $1/2$ immediately, and otherwise immediately forces the kings to be two squares apart (Claim 2).

\clearpage

\begin{figure}[!htbp]
    \centering
    \input{pseudocode/1}
    \caption{{Pseudocode, Part 1.}}
    \label{fig:pseudocode-first}
\end{figure}
\begin{figure}[!htbp]
    \centering
    \input{pseudocode/2}
    \caption{{Pseudocode, Part 2.}}
\end{figure}
\begin{figure}[!htbp]
    \centering
    \input{pseudocode/3}
    \caption{{Pseudocode, Part 3.}}
\end{figure}
\begin{figure}[!htbp]
    \centering
    \input{pseudocode/4}
    \caption{{Pseudocode, Part 4.}}
\end{figure}
\begin{figure}[!htbp]
    \centering
    \input{pseudocode/5}
    \caption{{Pseudocode, Part 5.}}
    \label{fig:pseudocode-last}
\end{figure}
\end{document}